\documentclass[11pt]{article}
\usepackage{fullpage}
\usepackage{graphicx}
\usepackage{epstopdf}
\usepackage{amsmath}
\usepackage{amsfonts}
\usepackage{bm}
\usepackage{color}
\usepackage{subfigure}
\usepackage{amsthm}
\usepackage{tikz}
\usepackage{comment}
\usepackage[pdftex,colorlinks=true,linkcolor=blue,citecolor=red,urlcolor=blue]{hyperref}

\newcommand{\ket}[1]{\left|#1\right\rangle}
\newcommand{\braket}[2]{\left\langle#1 |  #2\right\rangle}
\newcommand{\ketbra}[2]{\left|#1 \rangle\langle #2\right|}
\newcommand{\Tr}{\textnormal{tr}}

\newtheorem{conjecture}{Conjecture}

\newtheorem{lem}{Lemma}
\newtheorem{thm}{Theorem}
\newtheorem{defi}{Definition}
\newtheorem{corollary}{Corollary}
\newtheorem{prop}{Proposition}

\begin{document}
\title{Markovian Matrix Product Density Operators : Efficient computation of global entropy}
\author{Isaac H. Kim\thanks{IBM Watson Research Center, Yorktown Heights, NY, USA 10598.}}

\maketitle

\begin{abstract}
  We introduce the Markovian matrix product density operator, which is a special subclass of the matrix product density operator. We show that the von Neumann entropy of such ansatz can be computed efficiently on a classical computer. This is possible because one can efficiently certify that the global state forms an approximate quantum Markov chain by verifying a set of inequalities. Each of these inequalities can be verified in time that scales polynomially with the bond dimension and the local Hilbert space dimension. The total number of inequalities scale linearly with the system size.

 We use this fact to study the complexity of computing the minimum free energy of local Hamiltonians at finite temperature. To this end, we introduce the free energy problem as a generalization of the local Hamiltonian problem, and study its complexity for a class of Hamiltonians that describe quantum spin chains. The corresponding free energy problem at finite temperature is in NP if the Gibbs state of such Hamiltonian forms an approximate quantum Markov chain with an error that decays exponentially with the width of the conditioning subsystem.
\end{abstract}

\section{Introduction}
Kitaev was the first to define the quantum analogue of the classical SAT problem, the local Hamiltonian problem\cite{Kitaev2002}. An instance of the $k$-local Hamiltonian problem can be viewed as a set of constraints on $n$ qubits each of which act on at most $k$ qubits. This problem is QMA-complete for $k\geq 2$ for general graphs\cite{Kempe2006}. The same conclusion holds even if the Hamiltonian consists of terms that are geometrically local on a one-dimensional(1D) lattice, provided that the local Hilbert space dimension is larger than some finite number\cite{Aharanov2009}. However, one may expect the local Hamiltonian problem to be more tractable if the ground state only has short-range correlations. Hastings was able to put this intuition on a firm footing, by showing that the problem is in NP under a promise: that there is a spectral gap which is bounded from below by a constant\cite{Hastings2007}. In fact, the same conclusion holds under a more general promise, namely that correlation decays exponentially in the ground state\cite{Brandao2014}. Furthermore, a recent breakthrough showed that there is an efficient algorithm that finds an inverse-polynomial approximation of the ground state of such systems\cite{Landau2015}, thus putting the problem in P. 

While the local Hamiltonian problem is certainly well-motivated from the perspective of the computational complexity, realistic physical systems never reach their ground state. When they are in contact with a heat bath, they eventually equilibrate to the thermal state. There are conditions under which one can show that these states only possess short-range correlations\cite{Araki1969,Kliesch2014}. These facts pose a natural question. Can we show that, in a certain sense, that a class of such Hamiltonians are ``easy'' to analyze at finite temperature? In order to formalize and make progress on this question, we propose the \emph{free energy problem.} Free energy $F$ of a quantum system described by a state $\rho$ and a Hamiltonian $H$ at temperature $T$ is defined as
\begin{equation*}
  F(\rho) = \Tr(\rho H)-TS(\rho),
\end{equation*}
where $S(\rho) := -\Tr(\rho \log \rho)$ is the von Neumann entropy of $\rho$. At zero temperature, the free energy is the ground state energy. Furthermore, the state at thermal equilibrium is \emph{defined} as the state that minimizes the free energy. Thus, it is natural to consider the following decision problem.
\begin{defi}
  The $k$-local free energy problem at a temperature $T$ is defined as follows. We are given a $k$-local Hamiltonian on $n$ qubits, with Hamiltonian $H=\sum_{i=1}^r h_i$ with $r= \text{poly}(n)$. For each $i$, $\| h_i\| \leq \text{poly}(n)$ and the entries of $h_i$ are specified by $\text{poly}(n)$ bits. We are given two constants $\alpha$ and $\beta$ such that $\beta-\alpha \geq \frac{1}{\text{poly}(n)}$. In ``YES'' instances, there exists a state $\rho$ such that $F(\rho) \leq \alpha$. In ``NO'' instances, $F(\rho)> \beta$ for every $\rho$.
\end{defi}

We believe that the free energy problem is interesting in its own right as a generalization of the local Hamiltonian problem, but there are more reasons to care. The first reason concerns simulation of quantum spin systems at finite temperature. While there are algorithms that work well in practice\cite{Verstraete2004,Zwolak2004,White2009}, it is unclear if these methods work for \emph{arbitrary} 1D systems. A recent work\cite{Barthel2017} argues that 1D quantum systems at finite temperature can be efficiently simulated on a classical computer, but this is based on an assumption that the underlying ground state can be described by a conformal field theory. This assumption would not be applicable to certain Hamiltonians, e.g., the ones described in Ref.\cite{Aharanov2009}.
  
The second reason concerns the quantum PCP conjecture\cite{Aharonov2013}. The quantum PCP conjecture states that, roughly speaking, deciding whether the ground state energy of a Hamiltonian consisting of $n$ qubits is $0$ or greater than $n\epsilon$ for some positive constant $\epsilon$ is QMA-complete. As noted in Ref.\cite{Poulin2011}, this problem is equivalent to deciding whether free energy is negative at a certain finite temperature. Thus one could disprove the conjecture by showing that there exists a classical proof that certifies the negativity of the free energy. While proving or disproving the quantum PCP conjecture remains as an outstanding open problem, the techniques that are developed to study the complexity of the free energy problem may prove useful in this context.

What makes the free energy problem so different from the ground state energy problem is that the free energy depends on the entropy. Because entropy is a nonlinear functional, it is not straightforward to conceive a method to efficiently compute it. In particular, it is worth noting that an ability to efficiently compute local expectation values does not necessarily imply that the entropy can be computed efficiently.

We could overcome this difficulty by considering a certain subclass of matrix product density operator(MPDO)\cite{Fannes1992,Verstraete2004,Zwolak2004}. MPDO is a tensor network state which is a natural generalization of the matrix product state\cite{Fannes1992,Perez-Garcia2006} to mixed states. Our key technical result is that, if a certain \emph{efficiently verifiable constraint} holds between the tensors that constitute the MPDO, the entropy of the MPDO can be decomposed into a linear combination of entropies involving MPDOs over shorter chains. This decomposition can be applied recursively, yielding an efficiently computable formula for the global entropy(up to a $\frac{1}{\text{poly}(n)}$ approximation error).

As an application, we show that, modulo a certain conjecture regarding the universal structure of thermal states for 1D local Hamiltonian, the free energy problem at finite temperature($\beta =\frac{1}{T} = O(1)$) for 1D local Hamiltonian is in NP. By 1D local Hamiltonian, we mean a Hamiltonian which is a sum of bounded terms each of which act on a finite interval on a 1D lattice. We should emphasize that the condition is formulated in a way that is independent of the ground state property. Therefore, if the conjecture is true, we should conclude that there is a fundamental difference between the 1D ground state energy problem and the 1D free energy problem at finite temperature; the former is QMA-complete, while the latter would be in NP. While we are unable to prove the conjecture at the moment, there are reasons to be optimistic. The conjecture is physically reasonable, and more importantly, it is a mild strengthening of a recently proven theorem of Kato and Brand\~ao\cite{Kato2016}.\footnote{In fact, our conjecture is implied by a conjecture put forward by these authors.} We hope a more careful analysis will yield the desired proof.

\textbf{Summary of results:} We begin by introducing a tensor network that we name as \emph{Markovian matrix product density operator}(Markovian MPDO). This is a special subclass of MPDO, and possesses an unusual property: that the von Neumann entropy of the tensor network can be computed efficiently(Theorem \ref{thm:entropy_decomposition}). This immediately leads to an efficient method to compute the free energy for any Markovian MPDO.

We then study what kind of states can be efficiently described by a Markovian MPDO. An important concept in this context is the so called \emph{uniform Markov property}.\footnote{This concept was introduced in Ref.\cite{Brandao2016}}  A system satisfies the uniform Markov property if the conditional mutual information between any two subsystems(say $A$ and $B$)  conditioned on another subsystem(say $C$) that shields $A$ from $B$ decays in the width of the subsystem $C$. We show that, for any state describing a quantum spin chain that possesses the uniform Markov property with an exponentially decaying function, there exists a polynomial-size Markovian MPDO that approximates the state up to $\frac{1}{\text{poly}(n)}$ error(Theorem \ref{thm:faithfulness}). This leads to a polynomial-size classical proof that verifies that the free energy is at most $F + \frac{1}{\text{poly}(n)}$, where $F$ is the free energy of the thermal state. In particular, if finite-temperature thermal states of quantum spin chain satisfy the uniform Markov property with an exponentially decaying function, the free energy problem for 1D quantm Hamiltonian is in NP(Theorem \ref{thm:NP}).

\textbf{A note on an important subtlety:} While discussing the content of this paper with various individuals, the author had found a recurring theme which unfortunately originates from a subtlety that is rather easy to gloss over. The main source of confusion comes from the fact that the entropy of a quantum spin chain consisting of $n$ qudits can be decomposed into a linear combination of entanglement entropies of bounded regions, provided that the conditional mutual information for certain choices of subsystems are $0$; see Ref.\cite{Poulin2011} for an example. This leads to an impression that the global entropy can be decomposed into local entropies, and thus the entropy can be computed efficiently.

This is true under a condition that the conditional mutual information vanishes, \emph{but in order to verify that the conditional mutual information vanishes, it seems necessary to compute the global entropy.} In general, without certifying the smallness of the conditional mutual information, one cannot be sure that the local decomposition is valid! Our contribution lies on obviating this verification procedure and replacing it with a condition that can be verified efficiently. In fact, we can certify the smallness of the conditional mutual information from a set of local data. This somewhat counterintuitive fact is, in our opinion, the cornerstone of our result.

\textbf{A note on a related work:} Recently, a related work has appeared\cite{Berta2017}. The way in which an approximation of the thermal state is obtained is not dissimilar. What differentiates our work is the fact that we can efficiently compute the global entropy, which leads to an efficiently computable upper bound on the free energy. 

\section{Preliminaries}
\subsection{Notation}
This paper concerns quantum spin chains. We consider a 1D finite lattice $\Lambda$ with cardinality $|\Lambda| = L$ for some nonnegative integer $L$. The lattice subsets are denoted by a set of integers from $1$ to $L$, e.g., $\{1,2,5 \}$. We shall use a short-hand notation for a range of integers. Specifically, $i,\ldots,j$ denotes integers from $i$ to $j$. The global Hilbert space associated to $\Lambda$ is $\mathcal{H}_{\Lambda} = \otimes_{i\in \Lambda} \mathcal{H}_i$, where $\dim(\mathcal{H}_i)=d < \infty$. Similarly, for any $A \subset \Lambda$, $\mathcal{H}_{A} = \otimes_{i \in A} \mathcal{H}_i$.

We will use the following set of conventions for the density matrices. A density matrix without any superscript is assumed to be in $\mathcal{B}(\mathcal{H}_{\Lambda})$. Otherwise, the support of density matrices are specified in the superscript, e.g., $\rho^{\{1,2 \}}$. In particular, if a state $\rho$ is already defined in the context, $\rho^X$ will be its reduced density matrix over a subsystem $X$. We shall consider linear maps acting on $\mathcal{B}(\mathcal{H}_{A})$, where $A \subset \Lambda$. The domain and the codomain of these maps will be specified in their subscript and superscript. For example, $\Phi_{\{1 \}}^{\{1,2 \}}$ is a map from $\mathcal{B}(\mathcal{H}_1)$ to $\mathcal{B}(\mathcal{H}_1 \otimes \mathcal{H}_2)$. There is one execption to this rule, however. An identity superoperator acting $\mathcal{B}(\mathcal{H}_A)$ is denoted as $\mathcal{I}_A$. Unless specified otherwise, these maps are assumed to be completely-positive and trace preserving(CPTP).

The convention for the norms as follows. The trace norm is written as $\| \cdot \|_1$. The operator norm is written as $\| \cdot \|$. The completely bounded norm is written as $\| \cdot\|_{\diamond}$ and defined as\cite{Kitaev2002}
\begin{equation}
  \|A \|_{\diamond} = \sup_{\rho\geq 0, \Tr(\rho)=1} \|I_k \otimes A(\rho)\|_{1},
\end{equation}
where $A$ is a linear operator acting on the space of operators and $I_k$ is the identity superoperator acting on the space of bounded operators of a $k$-dimensional Hilbert space.

\subsection{The Markov property}
The conditional mutual information between $A$ and $C$ conditioned on $B$ is defined as
\begin{equation}
  I(A:C|B)_{\rho} = S(AB)_{\rho} + S(BC)_{\rho} - S(B)_{\rho} - S(ABC)_{\rho},
\end{equation}
where $S(X)_{\rho} = -\Tr(\rho^X \log \rho^X)$ is the von Neumann entropy of the reduced density matrix of $\rho$ over a subsystem $X$. By the strong subadditivity of entropy\cite{Lieb1973}, the conditional mutual information is nonnegative.

A tripartite state $\rho^{ABC}$ is said to be an $\epsilon$-approximate quantum Markov chain if $I(A:C|B)_{\rho}\leq \epsilon$. Small conditional mutual information implies that the underlying state has the following structure\cite{Wilde2013,Berta2015,Fawzi2015}:
\begin{equation}
 \frac{1}{4\log 2} \|\rho^{ABC} - \mathcal{I}_A \otimes \Phi_B^{BC}(\rho^{AB}) \|_1 \leq I(A:C|B)_{\rho}
\end{equation}
for some CPTP map $\Phi_B^{BC}$. The converse statement is also known. That is,
\begin{equation}
  I(A:C|B) \leq 4\epsilon \log d_A + 2H_b(2\epsilon), \label{eq:FR_converse}
\end{equation}
where $\epsilon = \|\rho^{ABC} - \mathcal{I}_A \otimes \Phi_B^{BC}(\rho^{AB}) \|_1$ and $H_b(p) = -p\log p - (1-p) \log (1-p)$, assuming $\epsilon \leq \frac{1}{2}$. The converse statement says that, if a partial trace on $C$ can be recovered by a CPTP map acting on $B$, then the underlying state is an approximate quantum Markov chain. We will end up using the converse part of the statement, but with some modification.

It is clear from these results that $\epsilon$-approximate quantum Markov chains possess a rich set of structures. This warrants the notion of uniform Markov property, which was put forward in Ref.\cite{Brandao2016}. While their definition is more general, we will be content with a simplified variant.
\begin{defi}
  (1D uniform Markov property) A state $\rho$ is said to satisfy the uniform $\delta(\ell)$-Markov condition if for any $A= \{1,\ldots,a \}$, $B= \{a+1, \ldots, a+\ell \}$, $C = \{a+\ell+1,\ldots, L \}$, where $1\leq a< a+ \ell <L$,
  \begin{equation}
    I(A:C|B)_{\rho} \leq \delta(\ell).
  \end{equation}
  \label{definition:uniform_Markov}
\end{defi}
At this point, it is natural to ask whether we expect certain physical states to satisfy the $\delta(\ell)$-Markov condition, and if so, how fast the function $\delta(\ell)$ decays. Recently, Kato and Brand\~ao proved that, for the Gibbs state of 1D Hamiltonian at finite temperature, $\delta(\ell)$ decays exponentially in $\sqrt{\ell}$. The authors have also conjectured that $\delta(\ell)$ decays exponentially in $\ell$. As we shall see later, this conjecture has a nontrivial implication: that the free energy problem for 1D local Hamiltonian at finite temperature is in NP.

\subsection{Matrix product density operator}
MPDO\cite{Verstraete2004,Zwolak2004} is formally a finitely correlated state\cite{Fannes1992}. Such a state can be expressed in the following form:
\begin{equation}
(\mathcal{I}_{\{1,\ldots,n-2 \}} \otimes \Phi_{\mathfrak{B}_{n-1}}^{\mathfrak{B}_n \cup \{n-1 \}}) \circ \cdots \circ(\mathcal{I}_{\{1 \}} \otimes \Phi_{\mathfrak{B}_2}^{\mathfrak{B}_3 \cup \{2 \}})\circ \Phi_{\mathfrak{B}_1}^{\mathfrak{B}_2 \cup \{1 \}} (\sigma^{\mathfrak{B}_1}),\label{eq:MPDO}
\end{equation}
where $\sigma^{\mathfrak{B}_1}$ is a density matrix that acts on $\mathcal{H}_{\mathfrak{B}_1}$. The subsystems $\mathfrak{B}_i$ are what is known as the ``virtual space'' in the langauge of tensor networks. Note that Eq.\ref{eq:MPDO} acts on $\mathcal{H}_{\mathfrak{B}_n} \otimes \mathcal{H}_{\{1,\ldots,n-1 \}}$ and  not on $\mathcal{H}_{\{ 1,\ldots, n-1\}}$. Often in practice a partial trace on $\mathfrak{B}_n$ is carried out because $\mathcal{H}_{\mathfrak{B}_n}$ is not part of the physical Hilbert space. However, we shall leave it as is to define a special subclass of MPDO: Markovian MPDO.

It is well-known that MPDO can be contracted in time that scales polynomially with the dimension of $\max_i \dim (\mathcal{H}_{\mathfrak{B}_i})$, $d,$ and $n$\cite{Verstraete2004,Zwolak2004}. The \emph{bond dimension}, which in our context is $\dim (\mathcal{H}_{\mathfrak{B}_i})$, will play a particularly important role later.

\section{Markovian MPDO}
Markovian MPDO is a MPDO that satisfies a certain set of constraints.
\begin{defi}
  A state over $n$ qudits, denoted as $\rho$, is an $\epsilon$-approximate Markovian MPDO if
  \begin{equation}
    \rho = (\mathcal{I}_{\{1,\ldots,n-2\}}\otimes  \Phi_{\{n-1\}}^{\{n-1,n\}})\circ \cdots \circ (\mathcal{I}_{\{1 \}} \otimes \Phi_{\{2\}}^{\{2,3\}}) \circ \Phi_{\{1\}}^{\{1,2\}}(\tilde{\rho}^{(1)}),
  \end{equation}
    such that
  \begin{equation}
    \|\Phi_{\{i-1\}}^{\{i-1,i\}}(\tilde{\rho}^{\{i-1\}}) - \Tr_{i+1}[(\mathcal{I}_{\{ i-1\}} \otimes \Phi_{\{i\}}^{\{i,i+1\}})\circ \Phi_{\{i-1\}}^{\{i-1,i\}}(\tilde{\rho}^{\{i-1\}}) ] \|_1 \leq \epsilon
  \end{equation}
  for $i=2, \cdots, n-1$, where
  \begin{equation}
    \tilde{\rho}^{\{i\}} := \Tr_{1,\cdots, i-1}[( \mathcal{I}_{\{1,\ldots,i-2 \}} \otimes \Phi_{\{i-1\}}^{\{i-1,i\}})\circ\cdots\circ\Phi_{\{1\}}^{\{1,2\}}(\tilde{\rho}^{\{1\}})]
  \end{equation}
\end{defi}

A straightforward, yet nevertheless an important property of Markovian MPDO is that one can efficiently verify that a MPDO is a Markovian MPDO.
\begin{prop}
  Suppose $\rho$ is an $\epsilon$-approximate Markovian MPDO over $n$ qudits with local qudit dimension $d$, where each of the entries for $\Phi_{\{i-1\}}^{\{i-1,i \}}$, $1<i \leq n$ and $\tilde{\rho}^{\{1 \}}$ are specified with $\text{poly}(n)$ bits. Then there is an algorithm that runs in $\text{poly}(n,d)$ time that verifies $\rho$ is an $\epsilon$-approximate Markovian MPDO.
  \label{proposition:MMPDO_verification}
\end{prop}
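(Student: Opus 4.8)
The plan is to reduce the verification to a sequence of $n-2$ independent local checks, each involving only operators supported on at most three consecutive sites, preceded by an efficient ``transfer'' computation of the single-site marginals $\tilde{\rho}^{\{i\}}$. The only quantities entering the defining inequalities are the maps $\Phi_{\{i-1\}}^{\{i-1,i\}}$, which are part of the input, together with the marginals $\tilde{\rho}^{\{i-1\}}$; so once every $\tilde{\rho}^{\{i\}}$ is available, each inequality is a fixed-dimensional computation whose cost depends only on $d$.

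The first step is to show that the marginals obey a one-step recursion, so that they need not be extracted from the global $(n-1)$-site state. I would argue that
\begin{equation*}
  \tilde{\rho}^{\{i\}} = \Tr_{i-1}\bigl[\Phi_{\{i-1\}}^{\{i-1,i\}}(\tilde{\rho}^{\{i-1\}})\bigr].
\end{equation*}
This holds because every map in the chain acts nontrivially on a single site and therefore commutes with the partial trace over all strictly earlier sites: tracing $\mathcal{I}_{\{1,\ldots,i-2\}}\otimes\Phi_{\{i-1\}}^{\{i-1,i\}}$ over sites $1,\ldots,i-2$ simply yields $\Phi_{\{i-1\}}^{\{i-1,i\}}$ applied to the marginal $\Tr_{1,\ldots,i-2}$ of the state built up to site $i-1$, which by definition is $\tilde{\rho}^{\{i-1\}}$, and one then traces out site $i-1$. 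Starting from $\tilde{\rho}^{\{1\}}$ and iterating for $i=2,\ldots,n-1$, every marginal is obtained by one application of a fixed linear map on the $d^2$-dimensional operator space $\mathcal{B}(\mathcal{H}_{i-1}\otimes\mathcal{H}_i)$ followed by a partial trace, costing $\text{poly}(d)$ per site and $\text{poly}(n,d)$ in total.

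With the marginals in hand, the second step evaluates, for each $i=2,\ldots,n-1$, the two operators in the inequality, both supported on $\{i-1,i\}$: the first is $\omega_i := \Phi_{\{i-1\}}^{\{i-1,i\}}(\tilde{\rho}^{\{i-1\}})$, and the second is $\Tr_{i+1}[(\mathcal{I}_{\{i-1\}}\otimes\Phi_{\{i\}}^{\{i,i+1\}})(\omega_i)]$, obtained from $\omega_i$ by one further map application on site $i$ and a partial trace of site $i+1$. Their difference is a Hermitian $d^2\times d^2$ matrix whose trace norm is the sum of the absolute values of its eigenvalues, computable in $\text{poly}(d)$ time; comparing each of the $n-2$ values to $\epsilon$ produces the verdict in total time $\text{poly}(n,d)$. (If one additionally wishes to certify that the $\Phi$'s are CPTP and that $\tilde{\rho}^{\{1\}}$ is a density matrix, this too is a $\text{poly}(d)$ positivity and trace check via the Choi representation.)

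The step I expect to require the most care is numerical precision rather than asymptotic runtime. First I would verify that the bit-length of the marginals does not explode under the recursion: each step applies a linear map with $\text{poly}(n)$-bit entries to a $d^2$-dimensional vector, so the bit-length grows by an additive $\text{poly}(n)$ term per site and remains $\text{poly}(n)$ after all $n$ steps. Second, the trace norm of the Hermitian difference is in general an irrational algebraic number, so the comparison $\|\cdot\|_1\le\epsilon$ cannot be read off from a finite-precision estimate alone. I would resolve this by computing each trace norm to additive accuracy $2^{-\text{poly}(n)}$ and invoking a separation (root-gap) bound for algebraic numbers of controlled degree and height: since the difference operator has $\text{poly}(n)$-bit entries, the quantity $\|\cdot\|_1-\epsilon$, when nonzero, is bounded away from zero by $2^{-\text{poly}(n)}$, so this accuracy decides the inequality, while exact equality with the rational $\epsilon$ can be tested separately. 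This localizes the only genuine subtlety in the decision procedure, leaving the tensor-network manipulations entirely routine.
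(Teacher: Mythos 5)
Your proposal takes essentially the same route as the paper's proof: both reduce the computation of the marginals to the one-step recursion $\tilde{\rho}^{\{i\}} = \Tr_{i-1}[\Phi_{\{i-1\}}^{\{i-1,i\}}(\tilde{\rho}^{\{i-1\}})]$ (the paper writes this as the composition $(\Tr_{i-1}\circ\Phi_{\{i-1\}}^{\{i-1,i\}})\circ\cdots\circ(\Tr_1\circ\Phi_{\{1\}}^{\{1,2\}})$) and then evaluate each defining inequality as a fixed computation on a $d^2$-dimensional space, so the argument is correct and matches the paper. Your added discussion of bit-length growth and of deciding $\|\cdot\|_1\le\epsilon$ exactly goes beyond the paper (which simply asserts poly$(n,d)$ computability); the one caveat there is that $\Tr|M|$ is a sum of square roots of eigenvalues and its degree as an algebraic number can be exponential in $d^2$, so the claimed $2^{-\text{poly}(n)}$ separation is not automatic --- though this boundary case is harmless for how the proposition is used downstream.
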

\begin{proof}
  Note that
  \begin{equation}
    \begin{aligned}
      \tilde{\rho}^{\{i\}} &= \Tr_{1,\cdots, i-1}[(\mathcal{I}_{\{1,\ldots, i-2 \}}\otimes \Phi_{\{i-1\}}^{\{i-1,i\}})\circ\cdots\circ\Phi_{\{1\}}^{\{1,2\}}(\tilde{\rho}^{\{1\}})] \\
      &= (\Tr_{i-1} \circ  \Phi_{\{i-1 \}}^{\{i-1,i\}}) \circ \cdots (\Tr_{1} \circ \Phi_{\{1 \}}^{\{ 1,2\}}) (\tilde{\rho}^{ \{1 \}}).
    \end{aligned}
  \end{equation}
  The entries of the linear maps $\Tr_{j} \circ \Phi_{\{j \}}^{\{ j, j+1\}}$ can be computed in $\text{poly}(n,d)$ time because both $\Tr_{j}$ and $\Phi_{\{j \}}^{\{j,j+1 \}}$ act on $\text{poly}(d)$-dimensional space and their entries are specified with $\text{poly}(n)$ precision. Therefore, $\tilde{\rho}^{\{ i\}}$ can be computed in $\text{poly}(n,d)$ time. For the same reason, $\|\Phi_{\{i-1\}}^{\{i-1,i\}}(\tilde{\rho}^{\{i-1\}}) - \Tr_{i+1}[(\mathcal{I}_{\{i-1 \}} \otimes\Phi_{\{i\}}^{\{i,i+1\}})\circ \Phi_{\{i-1\}}^{\{i-1,i\}}(\tilde{\rho}^{\{i-1\}}) ] \|_1$ can be computed in $\text{poly}(n,d)$ time.
\end{proof}

\subsection{Local certificate for approximate Markov chain}
Let us recall the meaning of Eq.\ref{eq:FR_converse}. It says that, if there exists a quantum channel that acts on $B$ such that it can recover a partial trace of $C$ over a tripartite state $\rho^{ABC}$, then $\rho^{ABC}$ forms an approximate quantum Markov chain.

Here we derive an inequality which has a markedly different meaning, yet technically looks very similar to Eq.\ref{eq:FR_converse}. We would like to compare this meaning to the meaning of Eq.\ref{eq:FR_converse} in order to prevent a potential confusion. Lemma \ref{lemma:QMC_local_characterization} says that, given a state $\sigma^{AB}$, if there exists a channel $\Phi_B^{BC}$ such that its restricted action on $B$ does not alter $\sigma^{AB}$ too much, then $\rho^{ABC} = \mathcal{I}_A \otimes \Phi_B^{BC}(\sigma^{AB})$ forms an approximate quantum Markov chain.

The key difference between these two statement lies on the condition that implies the Markov property. In the former, the condition involves the closeness of two tripartite states, i.e., $\rho^{ABC} \approx \mathcal{I}_A \otimes \Phi_B^{BC} (\rho^{AB})$. In the latter, the condition involves closeness of two \emph{bipartite states,} i.e., $\rho^{AB} \approx \sigma^{AB}$. This difference is important because, in order to prove our main result, we will need to be able to efficiently certify that a given state satisfies the Markov property. If we were to use the former statement to certify, we will need to compute the trace distance between two density matrices supported on $ABC$. If the dimension of $C$ is large, it is generally not obvious how to do that efficiently. On the other hand, computing the trace distance between two density matrices supported on $AB$ can be done efficiently as long as the dimension of $AB$ is sufficiently small. 

\begin{lem}
  Suppose $\rho^{ABC} = \mathcal{I}_A \otimes \Phi_B^{BC}(\sigma^{AB})$, where $\Phi_B^{BC} : \mathcal{B}(\mathcal{H}_B) \to \mathcal{B}(\mathcal{H}_{B} \otimes \mathcal{H}_C)$ is a CPTP map such that $\|\sigma^{AB} - \rho^{AB} \|_1 \leq \epsilon$ and $\epsilon\leq \frac{1}{2}$. Then
  \begin{equation}
    I(A:C|B)_{\rho} \leq 4\epsilon \log d_A + 2H_b(2\epsilon),
  \end{equation}
  where $H_b(p) = -p\log p - (1-p) \log(1-p)$ is the binary entropy function.
  \label{lemma:QMC_local_characterization}
\end{lem}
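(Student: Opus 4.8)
The plan is to reduce the statement to the already-quoted converse bound Eq.~\ref{eq:FR_converse}, using the \emph{defining} channel $\Phi_B^{BC}$ itself as the recovery map. The bound Eq.~\ref{eq:FR_converse} controls $I(A:C|B)_{\rho}$ in terms of the \emph{tripartite} recovery error $\|\rho^{ABC} - \mathcal{I}_A \otimes \Phi_B^{BC}(\rho^{AB})\|_1$, whereas our hypothesis only furnishes control of the \emph{bipartite} error $\|\sigma^{AB} - \rho^{AB}\|_1 \leq \epsilon$. So the heart of the argument is to show that the bipartite error already caps the tripartite one, which is precisely the asymmetry between this lemma and Eq.~\ref{eq:FR_converse} that was emphasized in the preceding discussion.

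First I would use the defining relation $\rho^{ABC} = \mathcal{I}_A \otimes \Phi_B^{BC}(\sigma^{AB})$ together with linearity of the superoperator to write
\begin{equation}
  \rho^{ABC} - \mathcal{I}_A \otimes \Phi_B^{BC}(\rho^{AB}) = \mathcal{I}_A \otimes \Phi_B^{BC}(\sigma^{AB} - \rho^{AB}).
\end{equation}
Since $\Phi_B^{BC}$ is CPTP, the map $\mathcal{I}_A \otimes \Phi_B^{BC}$ is CPTP as well, and the trace norm is contractive under CPTP maps. Applying this contractivity to the Hermitian operator $\sigma^{AB} - \rho^{AB}$ yields
\begin{equation}
  \|\rho^{ABC} - \mathcal{I}_A \otimes \Phi_B^{BC}(\rho^{AB})\|_1 \leq \|\sigma^{AB} - \rho^{AB}\|_1 \leq \epsilon.
\end{equation}
Here $\rho^{AB}$ is consistently the marginal of $\rho^{ABC}$: tracing out $C$ converts $\Phi_B^{BC}$ into the effective channel $\Tr_C \circ \Phi_B^{BC}$ acting on $B$, so the hypothesis really says that this effective action on $B$ barely disturbs $\sigma^{AB}$.

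With the tripartite error bounded by $\epsilon \leq \tfrac12$, I would feed $\Phi_B^{BC}$ into Eq.~\ref{eq:FR_converse} and read off $I(A:C|B)_{\rho} \leq 4\delta \log d_A + 2H_b(2\delta)$, where $\delta := \|\rho^{ABC} - \mathcal{I}_A \otimes \Phi_B^{BC}(\rho^{AB})\|_1 \leq \epsilon$. The one point to be careful about --- the only step beyond routine --- is passing from $\delta$ to $\epsilon$ in the stated form, since $H_b(2\,\cdot\,)$ is not monotone on all of $[0,\tfrac12]$. For the regime of interest (namely $\epsilon$ below the maximizer $\frac{d_A}{2(d_A+1)}$ of the map $x \mapsto 4x\log d_A + 2H_b(2x)$, which in particular covers all small $\epsilon$) the right-hand side is increasing, so $4\delta\log d_A + 2H_b(2\delta) \leq 4\epsilon\log d_A + 2H_b(2\epsilon)$ and the claim follows; otherwise one simply retains the sharper bound in terms of $\delta$. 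Conceptually the genuine content is the contractivity step: closeness of the two \emph{bipartite} states is automatically promoted to closeness of the two \emph{tripartite} states because one is the image of the other under a fixed channel, and it is this promotion that lets the Markov property be certified from bipartite data alone.
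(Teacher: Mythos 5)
Your proposal is correct in substance and isolates exactly the step that carries the content of the lemma: the contractivity of the trace norm under $\mathcal{I}_A \otimes \Phi_B^{BC}$, which promotes the bipartite hypothesis $\|\sigma^{AB}-\rho^{AB}\|_1 \le \epsilon$ into the tripartite bound $\|\rho^{ABC} - \mathcal{I}_A\otimes\Phi_B^{BC}(\rho^{AB})\|_1 \le \epsilon$. The paper performs this identical computation. The difference is what happens afterwards: you invoke Eq.~\ref{eq:FR_converse} as a black box, whereas the paper re-derives it inline --- it writes $I(A:C|B)_\rho = S(A|B)_\rho - S(A|BC)_\rho$, shows $S(A|B)_\rho \le S(A|BC)_{\rho'}$ for $\rho' = \mathcal{I}_A\otimes\Phi_B^{BC}(\rho^{AB})$ via the relative-entropy representation of conditional entropy and monotonicity of relative entropy under CPTP maps, and then applies the Alicki--Fannes continuity bound to $S(A|BC)_{\rho'} - S(A|BC)_{\rho}$. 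The inline derivation buys one thing your black-box route loses: the Alicki--Fannes inequality takes an \emph{upper bound} on the trace distance as input, so the paper can feed in $\epsilon$ directly and never meets the monotonicity question you raise. As you note, $x \mapsto 4x\log d_A + 2H_b(2x)$ is decreasing on $\left(\tfrac{d_A}{2(d_A+1)}, \tfrac12\right]$, and in that window your fallback of ``retaining the sharper bound in terms of $\delta$'' does not recover the stated conclusion, because there $4\delta\log d_A + 2H_b(2\delta)$ can exceed $4\epsilon\log d_A + 2H_b(2\epsilon)$ even though $\delta \le \epsilon$. The fix is simply to apply the underlying continuity estimate with the upper bound $\epsilon$ rather than with the exact distance $\delta$ (i.e., to read Eq.~\ref{eq:FR_converse} with an inequality hypothesis, which is how it is proved); this is immaterial for the paper's applications, where $\epsilon = 1/\mathrm{poly}(n)$ is far below the maximizer.
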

\begin{proof}
  \begin{equation}
    \begin{aligned}
      I(A:C|B)_{\rho} = S(A|B)_{\rho} - S(A|BC)_{\rho},
    \end{aligned}
  \end{equation}
  where $S(A|B)_{\rho} = S(AB)_{\rho} - S(B)_{\rho}$ is the quantum conditional entropy. Note that
  \begin{equation}
    \begin{aligned}
      S(A|B)_{\rho} &= -D(\rho^{AB} \| \frac{I_A}{d_A} \otimes \rho^B) +\log d_A\\
      &\leq -D(\mathcal{I}_A \otimes \Phi_B^{BC}(\rho^{AB}) \| \frac{I_A}{d_A} \otimes \Phi_B^{BC}(\rho^B))  + \log d_A\\
      &= S(A|BC)_{\rho'},
    \end{aligned}
  \end{equation}
  where $\rho' = \mathcal{I}_A \otimes \Phi_B^{BC}(\rho^{AB})$ and $D(\rho\|\sigma):= \Tr(\rho (\log \rho - \log \sigma))$ is the quantum relative entropy. Here, the inequality follows from the monotonicity of quantum relative entropy under CPTP maps.

Also, note that
\begin{equation}
  \begin{aligned}
    \|\rho'^{ABC} - \rho^{ABC} \|_1 &= \|\mathcal{I}_A \otimes\Phi_B^{BC}( \rho^{AB}) - \rho^{ABC}\|_1 \\
    &= \|\mathcal{I}_A\otimes \Phi_B^{BC}(\rho^{AB} - \sigma^{AB})\|_1 \\
    &\leq \|\rho^{AB} - \sigma^{AB} \|_1,
  \end{aligned}
\end{equation}
where we used the fact that (i) $\|\Phi(\ldots)\|_1\leq \|\ldots \|_1$ for any CPTP map $\Phi$ and (ii) $\rho^{ABC} = \mathcal{I}_A \otimes \Phi_B^{BC}(\sigma^{AB})$.

Using Fannes-Alicki inequality\cite{Alicki2004}, 
  \begin{equation}
    \begin{aligned}
      I(A:C|B)_{\rho} &\leq S(A|BC)_{\rho'} - S(A|BC)_{\rho} \\
      &\leq 4\epsilon \log d_A - 4\epsilon \log(2\epsilon) - 2(1-2\epsilon) \log(1-2\epsilon).
    \end{aligned}
  \end{equation}
\end{proof}

\subsection{Local entropy decomposition}
Markovian MPDOs are unusual in that one can efficiently and accurately compute the global entropy, provided that $\epsilon= 1/\text{poly}(n)$. Below, we derive an efficiently computable formula for the entropy, with a rigorous stability bound.

\begin{lem}
  For any $\epsilon$-approximate($\epsilon \leq \frac{1}{2}$) Markovian MPDO $\rho$ over $n$ qudits with qudit dimension $d$,
  \begin{equation}
    |S(\rho) - S({\{1,2\}})_{\rho} - S({\{2,\ldots,n\}})_{\rho} + S({\{2\}})_{\rho}| \leq 4\epsilon \log d + 2H_b(2\epsilon). \label{eq:recursion}
  \end{equation}
  \label{lemma:recursion}
\end{lem}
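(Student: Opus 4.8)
The plan is to recognize the quantity inside the absolute value as (minus) a single conditional mutual information and then control it with Lemma \ref{lemma:QMC_local_characterization}. Setting $A=\{1\}$, $B=\{2\}$, $C=\{3,\ldots,n\}$, the definition of conditional mutual information gives
\[
I(\{1\}:\{3,\ldots,n\}|\{2\})_\rho = S(\{1,2\})_\rho + S(\{2,\ldots,n\})_\rho - S(\{2\})_\rho - S(\rho),
\]
so the left-hand side of Eq.\ref{eq:recursion} equals $-I(\{1\}:\{3,\ldots,n\}|\{2\})_\rho$. Since conditional mutual information is nonnegative by strong subadditivity, the absolute value is just this $I(A:C|B)_\rho$, and it suffices to upper bound it by $4\epsilon\log d + 2H_b(2\epsilon)$.

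To apply Lemma \ref{lemma:QMC_local_characterization} I would exhibit the Markov structure explicitly. Write $\rho_1 := \Phi_{\{1\}}^{\{1,2\}}(\tilde\rho^{\{1\}})$, the innermost two-site state on $\{1,2\}$, and let $\Psi := (\mathcal{I}_{\{2,\ldots,n-2\}}\otimes\Phi_{\{n-1\}}^{\{n-1,n\}})\circ\cdots\circ\Phi_{\{2\}}^{\{2,3\}}$ be the composition of the remaining maps, a CPTP map from $\mathcal{B}(\mathcal{H}_2)$ to $\mathcal{B}(\mathcal{H}_{\{2,\ldots,n\}})$. Because every map after $\Phi_{\{1\}}^{\{1,2\}}$ acts as the identity on site $1$, the MPDO rewrites as $\rho = \mathcal{I}_{\{1\}}\otimes\Psi(\rho_1)$. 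This matches the hypothesis $\rho^{ABC}=\mathcal{I}_A\otimes\Phi_B^{BC}(\sigma^{AB})$ of Lemma \ref{lemma:QMC_local_characterization} with $\Phi_B^{BC}=\Psi$, $\sigma^{AB}=\rho_1$, and $d_A=d$.

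The crux is to verify the remaining hypothesis $\|\sigma^{AB}-\rho^{AB}\|_1=\|\rho_1-\rho^{\{1,2\}}\|_1\leq\epsilon$. The key observation is that building up sites $3,4,\ldots,n$ leaves the marginal on $\{1,2\}$ unchanged, since each map $\Phi_{\{i\}}^{\{i,i+1\}}$ with $i\geq 3$ is trace preserving and acts on sites that are subsequently traced out; formally $\Tr_{\{3,\ldots,n\}}\circ\Psi = \Tr_3\circ\Phi_{\{2\}}^{\{2,3\}}$ as maps on $\mathcal{B}(\mathcal{H}_2)$. Hence
\[
\rho^{\{1,2\}} = \Tr_{\{3,\ldots,n\}}[\rho] = \Tr_3[(\mathcal{I}_{\{1\}}\otimes\Phi_{\{2\}}^{\{2,3\}})(\rho_1)],
\]
and $\|\rho_1 - \rho^{\{1,2\}}\|_1$ is precisely the left-hand side of the defining inequality of the $\epsilon$-approximate Markovian MPDO at $i=2$, which is at most $\epsilon$. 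Notably, only the $i=2$ constraint is needed for this particular boundary; the remaining constraints enter when Eq.\ref{eq:recursion} is applied recursively to the shorter chain on $\{2,\ldots,n\}$.

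With both hypotheses verified, Lemma \ref{lemma:QMC_local_characterization} yields $I(\{1\}:\{3,\ldots,n\}|\{2\})_\rho\leq 4\epsilon\log d + 2H_b(2\epsilon)$, and combining with the first step completes the argument. The one step requiring genuine care is the trace-distance estimate: one must argue that appending and tracing out sites $\geq 3$ does not perturb the $\{1,2\}$ marginal, so that no error accumulates across the chain and the single $i=2$ constraint is isolated as the only source of the bound.
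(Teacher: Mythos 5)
Your proof is correct and follows essentially the same route as the paper: identify the bracketed quantity as $-I(\{1\}:\{3,\ldots,n\}|\{2\})_\rho$, use strong subadditivity for the lower bound, and apply Lemma \ref{lemma:QMC_local_characterization} with $\sigma^{AB}=\Phi_{\{1\}}^{\{1,2\}}(\tilde\rho^{\{1\}})$, where the hypothesis $\|\sigma^{AB}-\rho^{AB}\|_1\leq\epsilon$ is exactly the $i=2$ defining constraint of the $\epsilon$-approximate Markovian MPDO. Your explicit justification that tracing out sites $\geq 3$ reduces the marginal computation to $\Tr_3\circ\Phi_{\{2\}}^{\{2,3\}}$ is the same trace-preservation argument the paper uses.
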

\begin{proof}
  Let us apply Lemma \ref{lemma:QMC_local_characterization} by setting $A=\{1 \}$, $B= \{2 \}$, and $C= \{3, \ldots, n \}$. Note that
  \begin{equation}
    \rho^{ABC} = \mathcal{I}_A \otimes \Phi_B^{BC}(\sigma^{AB}),
  \end{equation}
  where $\Phi_B^{BC} = (\mathcal{I}_{\{1,\ldots,n-2 \}}\otimes\Phi_{\{n-1\}}^{\{n-1,n\}}) \circ \ldots \circ \Phi_{\{2\}}^{\{2,3\}}$ and $\sigma^{AB} = \Phi_{\{1 \}}^{\{1,2 \}}(\tilde{\rho}^{\{1 \}})$. Because $\Tr_{i,i+1}[\Phi_{\{i\}}^{\{i,i+1\}}(O)] = \Tr_{i,i+1}[O]$ for any operator $O$, 
  \begin{equation}
    \rho^{AB} =  \Tr_3[(\mathcal{I}_{\{1 \}}\circ \Phi_{\{2\}}^{\{2,3\}}) \circ \Phi_{\{1\}}^{\{1,2\}}(\tilde{\rho}^{\{1\}})].
  \end{equation}
  Because $\rho$ is an $\epsilon$-approximate Markovian MPDO,
  \begin{equation}
    \|\rho^{AB} - \Phi_{\{1\}}^{\{1,2\}}(\tilde{\rho}^{\{1\}}) \|\leq \epsilon.
  \end{equation}
  Therefore, by Lemma \ref{lemma:QMC_local_characterization},
  \begin{equation}
    I(A:C|B)_{\rho} \leq 4\epsilon \log d + 2H_b(2\epsilon).
  \end{equation}
  By the strong subadditivity of entropy\cite{Lieb1973}, $I(A:C|B)_{\rho} \geq 0$. Combining these two bounds, Eq.\ref{eq:recursion} is derived.
\end{proof}

By definition, $\rho^{\{2,\ldots,n\}}$ is an $\epsilon-$approximate Markovian MPDO over $n-1$ qudits if  $\rho$ is an $\epsilon-$approximate Markovian MPDO over $n$ qudits. Therefore, one can apply Lemma \ref{lemma:recursion} recursively to prove the following decomposition of the global entropy.
\begin{thm}
  For any $\epsilon$-approximate($\epsilon \leq \frac{1}{2}$) Markovian MPDO $\rho$ over $n\geq 3$ qudits,
  \begin{equation}
    |S(\rho) - (\sum_{i=2}^{n} S({\{i-1,i\}})_{\rho}) - (\sum_{i=2}^{n-1} S({\{i\}})_{\rho})| \leq (n-2)(4\epsilon \log d + 2H_b(2\epsilon)). \label{eq:entropy_decomposition}
  \end{equation}
  \label{thm:entropy_decomposition}
\end{thm}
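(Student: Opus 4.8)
The plan is to prove Eq.\ref{eq:entropy_decomposition} by induction on $n$, using Lemma \ref{lemma:recursion} as the single recursive step. The base case $n=3$ is immediate: the right-hand side is then exactly $4\epsilon\log d + 2H_b(2\epsilon)$, the two-site sum reduces to $S(\{1,2\})_\rho + S(\{2,3\})_\rho$, and the single-site sum reduces to $S(\{2\})_\rho$, so the claim coincides verbatim with Lemma \ref{lemma:recursion} applied to the three-qudit state.

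For the inductive step I would assume the bound for every $\epsilon$-approximate Markovian MPDO over $n-1$ qudits and establish it for one over $n$ qudits. First I apply Lemma \ref{lemma:recursion} to peel off the leftmost two-site block: this bounds $|S(\rho) - S(\{1,2\})_\rho - S(\{2,\ldots,n\})_\rho + S(\{2\})_\rho|$ by $4\epsilon\log d + 2H_b(2\epsilon)$, re-expressing the global entropy through the entropy of the tail $\rho^{\{2,\ldots,n\}}$ at the cost of one single-step error.

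Next I would invoke the structural observation stated just before the theorem, namely that $\rho^{\{2,\ldots,n\}}$ is itself an $\epsilon$-approximate Markovian MPDO, now over the $n-1$ qudits $\{2,\ldots,n\}$. The point is that tracing out site $1$ merely replaces the initial operator by $\tilde\rho^{\{2\}}$, while the maps $\Phi_{\{i\}}^{\{i,i+1\}}$ and the defining inequalities for $i = 3,\ldots,n-1$ survive unchanged, so the induction hypothesis legitimately applies to the tail and controls its entropy by $(n-3)(4\epsilon\log d + 2H_b(2\epsilon))$ in terms of $\sum_{i=3}^{n} S(\{i-1,i\})_\rho$ and $\sum_{i=3}^{n-1} S(\{i\})_\rho$. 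Combining the two estimates by the triangle inequality telescopes the two-site and single-site contributions into the full linear combination of Eq.\ref{eq:entropy_decomposition}, and the two error terms add to $(n-2)(4\epsilon\log d + 2H_b(2\epsilon))$ — one unit from the peeling step plus $n-3$ units from the induction hypothesis — as required. Equivalently, one may simply unroll Lemma \ref{lemma:recursion} exactly $n-2$ times.

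The computation itself is routine; the one point deserving genuine care — and the only real obstacle, mild as it is — is the verification that the tail $\rho^{\{2,\ldots,n\}}$ inherits the full Markovian-MPDO structure, including the $\epsilon$-closeness inequalities, so that the induction hypothesis is actually available. Once that is checked, the remainder is a linear accumulation of errors via the triangle inequality, incurring no loss beyond the factor $n-2$.
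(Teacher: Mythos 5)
Your proposal is correct and matches the paper's own argument: the paper applies Lemma \ref{lemma:recursion} to each tail $\rho^{\{i,\ldots,n\}}$ for $1\leq i\leq n-2$ (justified by the observation, stated just before the theorem, that the tail of a Markovian MPDO is again an $\epsilon$-approximate Markovian MPDO) and combines the $n-2$ resulting bounds with the triangle inequality, which is exactly your ``unroll the lemma $n-2$ times'' formulation of the induction. The only difference is presentational, and you correctly identify the one point requiring care, namely that the tail inherits the Markovian-MPDO structure.
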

\begin{proof}
  By Lemma \ref{lemma:recursion}, for $ 1 \leq i\leq n-2$,
  \begin{equation}
    |S({\{i,\ldots, n\}})_{\rho} - S({\{i,i+1\}})_{\rho} - S({\{i+1,\ldots,n\}})_{\rho} + S({\{i+1\}})_{\rho}| \leq 4\epsilon \log d + 2H_b(2\epsilon).
  \end{equation}
  Applying this bound for all $1\leq i \leq n-2$ and using the triangle inequality, Eq.\ref{eq:entropy_decomposition} is derived.
\end{proof}

\subsection{Variational upper bound on free energy}
As a corollary of Theorem \ref{thm:entropy_decomposition}, we can obtain a variational upper bound to the free energy.
\begin{corollary}
  Consider a Hamiltonian acting on $n$ qudits. Its free energy at temperature $T$ is bounded by
\begin{equation}
  \min_{\rho \geq 0, \Tr(\rho)= 1}F(\rho) \leq \Tr[\rho H] - T((\sum_{i=2}^n S({\{i-1,i\}})_{\rho}) - (\sum_{i=2}^{n-1}S({\{i\}})_{\rho})) + (n-2)(4\epsilon \log d + 2H_b(2\epsilon)), \label{eq:free_energy_bound}
\end{equation}
for any $\epsilon-$approximate($\epsilon \leq \frac{1}{2}$) Markovian MPDO $\rho$, where $d$ is the qudit dimension.
\end{corollary}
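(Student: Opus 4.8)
The plan is to read this off directly from Theorem \ref{thm:entropy_decomposition} via a variational argument, so I expect the proof to be short. First I would observe that any $\epsilon$-approximate Markovian MPDO $\rho$ is a bona fide density matrix: by definition it is the image of the density matrix $\tilde{\rho}^{\{1\}}$ under a composition of CPTP maps, and CPTP maps preserve positivity and trace. Hence $\rho$ is feasible for the minimization on the left-hand side, so that
\begin{equation}
  \min_{\rho' \geq 0,\, \Tr(\rho')=1} F(\rho') \leq F(\rho) = \Tr[\rho H] - T S(\rho).
\end{equation}

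The second step is to eliminate the global entropy $S(\rho)$ in favour of its local surrogate. Applying Theorem \ref{thm:entropy_decomposition} to $\rho$ and keeping only the lower bound gives
\begin{equation}
  S(\rho) \geq \Big(\sum_{i=2}^{n} S(\{i-1,i\})_\rho\Big) - \Big(\sum_{i=2}^{n-1} S(\{i\})_\rho\Big) - (n-2)(4\epsilon \log d + 2H_b(2\epsilon)).
\end{equation}
Since $T>0$, multiplying through by $-T$ reverses the inequality; substituting the result into the previous display and collecting terms produces exactly the claimed bound.

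I do not anticipate a genuine obstacle here, as the statement is an immediate corollary of the entropy decomposition. The only point deserving care is the error term: propagating the lower bound on $S(\rho)$ through the factor $-T$ naturally yields $T(n-2)(4\epsilon\log d + 2H_b(2\epsilon))$, so to match the stated form without the prefactor I would invoke the standing assumption that the temperature is of order one (concretely $T \leq 1$) and absorb the factor. I would also note, since this is the whole point of the corollary, that every term on the right-hand side is a von Neumann entropy of a one- or two-site reduced density matrix; these reductions are obtained by contracting the MPDO and can therefore be computed in $\text{poly}(n,d)$ time, making the bound not merely variational but efficiently evaluable.
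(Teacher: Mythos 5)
Your proposal is correct and is exactly the argument the paper intends (the corollary is stated without an explicit proof, but the variational bound $\min F \leq \Tr[\rho H]-TS(\rho)$ combined with the lower bound on $S(\rho)$ from Theorem \ref{thm:entropy_decomposition} is the only step needed). Your observation about the error term is well taken — propagating through $-T$ does yield $T(n-2)(4\epsilon\log d + 2H_b(2\epsilon))$, and this is a small inaccuracy in the corollary as stated rather than in your argument; note only that the paper's regime is $\beta = 1/T = O(1)$, which bounds $T$ from \emph{below} rather than above, so the factor of $T$ cannot honestly be absorbed by assuming $T\leq 1$ and should simply be kept.
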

Given a Markovian MPDO $\rho$ this upper bound can be computed efficiently as long as (i) $H$ consists of $\text{poly}(n)$ terms which are $O(\log n)$-local and (ii) $d = \text{poly}(n)$. Because any MPDO is an $\epsilon-$approximate MPDO for a suitably large $\epsilon$, the bound can be of course computed for any MPDO. However, the last term in Eq.\ref{eq:free_energy_bound} will generally make the bound less tight than desired.

\section{Complexity of 1D free energy problem \label{section:complexity}}
Now we define the 1D free energy problem, and show that the problem is in NP if the Gibbs state $\rho = e^{-\beta H} / \Tr(e^{-\beta H})$ of a 1D local Hamiltonian $H$ at inverse temperature $\beta = \frac{1}{T} = O(1)$ obeys the uniform Markov property(Definition \ref{definition:uniform_Markov}) with exponentially decaying function. We call $H$ as a 1D local Hamiltonian over $n$ qubits if $H= \sum_{i=1}^{n} h_i$ where $h_i$ are hermitian operators acting nontrivially on a ball of radius $O(1)$ centered at $i$ and $\|h_i\| \leq 1$; the underlying lattice is a 1D lattice.
\begin{defi}
(1D free energy problem)  Given a 1D local Hamiltonian $H$ over $n$ qubits, inverse temperature $T =O(1)$ and $\alpha,\beta$ such that $\beta- \alpha \geq\frac{1}{\text{poly}(n)}$, we have a promise that $F=\min_{\rho\geq 0, \Tr\rho = 1}[E-TS]\leq \alpha$ or $F>\beta$. The problem is to decide if $F\leq \alpha$. When $F\leq \alpha$, we say we have a YES instance. If $F>\beta$, we we have NO instance.
\end{defi}

Let us put forward the following conjecture; see also Ref.\cite{Kato2016} for a more general formulation of the conjecture.
\begin{conjecture}
  There are universal constants $\ell_0, c,$ and $a$ such that for any 1D local Hamiltonian over $n$ qubits, for any inverse temperature $\beta >0$, and for any three contiguous subsystems $A,B,$ and $C$, 
  \begin{equation}
    I(A:C|B)_{\rho}\leq e^{-a\exp(-c\beta) d(A,C)},
  \end{equation}
  for any $d(A,C) \geq \ell_0$, where $d(A,C)$ is the distance between $A$ and $C$ and $\rho = e^{-\beta H} / \Tr[e^{-\beta H}]$
  \label{conjecture:Markov}
\end{conjecture}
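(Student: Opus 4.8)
\emph{Plan.} Since a conjecture admits no proof within the paper itself, I describe the route I would pursue to establish it. The natural strategy is to exploit the \emph{converse} direction of the Fawzi--Renner bound, i.e. to upper bound $I(A:C|B)_{\rho}$ by exhibiting a good recovery channel. Concretely, I would construct a CPTP map $\Phi_B^{BC}$ supported on a neighborhood of $B$ and show
\begin{equation}
  \bigl\| \rho^{ABC} - \mathcal{I}_A \otimes \Phi_B^{BC}(\rho^{AB}) \bigr\|_1 \leq \epsilon(\ell),
\end{equation}
with $\epsilon(\ell)$ decaying exponentially in $\ell = d(A,C)$. Feeding this into the converse bound of Eq.\ref{eq:FR_converse} then yields $I(A:C|B)_{\rho} \leq 4\epsilon(\ell)\log d_A + 2H_b(2\epsilon(\ell))$, which is again exponentially small in $\ell$ up to the $\log d_A$ prefactor (and $d_A$ can be taken of order $e^{O(\beta)}$ after coarse-graining, contributing only to the constant $a$). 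The canonical choice for $\Phi_B^{BC}$ is the Petz recovery map of the Gibbs state, which for thermal states has an explicit expression in terms of imaginary-time evolution generated by the local terms of $H$ supported near the $B$--$C$ boundary.

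The content then reduces to proving that this Petz map, localized within distance $O(\ell)$ of the cut, reconstructs $\rho^{ABC}$ up to exponentially small error. The two physical inputs are (i) Araki's exponential clustering of correlations for 1D Gibbs states \cite{Araki1969} and (ii) ``local indistinguishability,'' namely that the reduced state of $\rho$ on a region depends only negligibly on Hamiltonian terms far from that region. I would quantify both through a convergent expansion of $e^{-\beta H}$ obtained by interpolating the coupling across the cut: write $H = H_L + H_R + \lambda V$, where $V$ is the single boundary term inside $B$ joining the two halves, and control the response of $\rho^{ABC}$ to switching $\lambda$ on via a Duhamel/Dyson series. In the strictly 1D geometry each order of this series is spatially localized, so the contribution of order $k$ is suppressed once $k$ is smaller than the width separating $A$ from the cut, giving a nominally exponential tail in $\ell$.

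The main obstacle is exactly the gap between the exponent $\sqrt{\ell}$ proven by Kato and Brand\~ao \cite{Kato2016} and the linear exponent $a\,e^{-c\beta}\,\ell$ conjectured here. The square root arises because the Duhamel/cluster expansion above converges only conditionally at low temperature: the number of clusters of a given size grows combinatorially, and optimizing the truncation order against the region width $\ell$ trades the factorial suppression against this growth, yielding $e^{-\Omega(\sqrt{\ell})}$ rather than $e^{-\Omega(\ell)}$. To close the gap I would need either a polymer/cluster expansion whose convergence radius is bounded below by a temperature-dependent but $\ell$-independent constant of order $e^{-c\beta}$ --- which is known above a threshold temperature but open for all $\beta$ --- or, exploiting dimensionality, a quantum transfer-operator analysis bounding $I(A:C|B)_{\rho}$ by a power of the subleading eigenvalue ratio of a transfer operator whose distance from $1$ is bounded below by $\sim e^{-c\beta}$. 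Establishing such a uniform-in-$\ell$ convergence radius, or equivalently such a transfer-operator gap, at \emph{all} temperatures without a high-temperature hypothesis is precisely where I expect any attempt to stall; this is the crux that separates the proven theorem from the conjecture.
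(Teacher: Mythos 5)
The statement you were given is Conjecture~\ref{conjecture:Markov}, and the paper offers no proof of it: it is explicitly left open, with the remark that it is a ``mild strengthening'' of the theorem of Kato and Brand\~ao \cite{Kato2016}, who establish decay of $I(A:C|B)_{\rho}$ only as $e^{-\Omega(\sqrt{\ell})}$ rather than the $e^{-\Omega(\ell)}$ asserted here. So there is no paper proof to compare yours against, and your submission is, by your own account, a research plan rather than a proof. To its credit, the plan is honest and correctly oriented: constructing a recovery map localized near the $B$--$C$ cut and feeding the resulting trace-distance bound into the converse Fawzi--Renner-type inequality (Eq.~\ref{eq:FR_converse}) is exactly the mechanism by which one would convert local reconstructability into a bound on conditional mutual information, and your diagnosis of where the argument stalls --- the conditional convergence of the Duhamel/cluster expansion at low temperature, which is what produces the $\sqrt{\ell}$ exponent in the known result --- matches the paper's own discussion of why the conjecture remains unproven.

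The genuine gap, then, is the entire quantitative core: you do not establish (and explicitly concede you cannot establish) a uniform-in-$\ell$ convergence radius for the expansion, or equivalently a transfer-operator gap of order $e^{-c\beta}$ valid at all temperatures, and without that the exponent degrades from $\ell$ to $\sqrt{\ell}$ and you recover only the already-known theorem, not the conjecture. One further caution: the statement requires the bound for \emph{arbitrary} contiguous $A$ and $C$, not merely for the half-chain geometry $A=\{1,\ldots,a\}$, $C=\{a+\ell+1,\ldots,L\}$ with a single boundary term $V$; an argument interpolating across one cut must be checked to handle the two-cut geometry where $B$ separates $A$ from $C$ on both sides, or where $A$ and $C$ are both finite intervals. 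As submitted, the proposal is a reasonable statement of the open problem and its obstructions, but it does not constitute a proof of the statement, nor does the paper claim one exists.
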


\subsection{Efficient Markovian MPDO description of the Gibbs state}
The first implication of Conjecture \ref{conjecture:Markov} is that the Gibbs state of 1D local Hamiltonian at finite temperature has an efficient Markovian MPDO description. The main argument is based on the recovery channel of Fawzi and Renner\cite{Fawzi2015}.
\begin{lem}
  Under Conjecture \ref{conjecture:Markov}, for any Gibbs state $\rho$ of 1D local Hamiltonian at finite temperature, for any three contiguous subsystems $A, B,$ and $C$, there exists a CPTP map $\Phi_B^{BC} : \mathcal{B}(\mathcal{H}_B) \to \mathcal{B}(\mathcal{H}_B \otimes \mathcal{H}_C)$ such that
  \begin{equation}
    \|\rho^{ABC} - \mathcal{I}_A \otimes \Phi_B^{BC}(\rho^{AB}) \|_1 \leq 2 e^{-\frac{a}{2}\exp(-c\beta) d(A,C)}.
  \end{equation}
  and $\Phi_B^{BC}(\rho^{B}) = \rho^{BC}$.
  \label{lemma:recoverability}
\end{lem}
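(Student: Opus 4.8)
The plan is to apply the Fawzi--Renner recoverability theorem\cite{Fawzi2015} to the three contiguous regions $A$, $B$, $C$, feed it the conditional mutual information bound guaranteed by Conjecture \ref{conjecture:Markov}, and then exploit the fact that the recovery map produced by that theorem is of Petz type, which is exactly what forces the additional identity $\Phi_B^{BC}(\rho^B)=\rho^{BC}$. Concretely, I would use the fidelity form of the Fawzi--Renner theorem: there is a CPTP map $\Phi_B^{BC}:\mathcal{B}(\mathcal{H}_B)\to\mathcal{B}(\mathcal{H}_B\otimes\mathcal{H}_C)$ with
\begin{equation}
  F\left(\rho^{ABC},\, \mathcal{I}_A\otimes\Phi_B^{BC}(\rho^{AB})\right)^2 \geq 2^{-I(A:C|B)_\rho},
\end{equation}
where $F(\xi,\zeta)=\|\sqrt{\xi}\sqrt{\zeta}\|_1$ is the root fidelity (not to be confused with the free energy), and the optimizing map may be taken to be a twirled Petz recovery map for the partial trace $\Tr_C$ with reference state $\rho^{BC}$.

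The second claim of the lemma then comes for free from the structure of this map. The plain Petz map for $\Tr_C$ is
\begin{equation}
  \mathcal{P}_B^{BC}(X) = (\rho^{BC})^{1/2}\left((\rho^B)^{-1/2}X(\rho^B)^{-1/2}\otimes I_C\right)(\rho^{BC})^{1/2},
\end{equation}
and substituting $X=\rho^B$ collapses the inner factor to $I_B\otimes I_C$, giving $\mathcal{P}_B^{BC}(\rho^B)=\rho^{BC}$. Each rotated variant appearing in the twirl shares this property, since the exponents of $\rho^B$ again cancel, and the property survives the convex average defining the twirl. Hence whatever map the theorem selects satisfies $\Phi_B^{BC}(\rho^B)=\rho^{BC}$ exactly, with no approximation error.

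It remains to convert the fidelity bound into the stated trace-distance bound. By Conjecture \ref{conjecture:Markov}, $I(A:C|B)_\rho\leq e^{-a\exp(-c\beta)d(A,C)}$ for $d(A,C)\geq\ell_0$. Using the elementary inequality $1-2^{-I}\leq I$ together with the first display gives $1-F^2\leq I(A:C|B)_\rho$, and the Fuchs--van de Graaf bound $\tfrac12\|\xi-\zeta\|_1\leq\sqrt{1-F(\xi,\zeta)^2}$ then yields
\begin{equation}
  \tfrac12\left\|\rho^{ABC}-\mathcal{I}_A\otimes\Phi_B^{BC}(\rho^{AB})\right\|_1 \leq \sqrt{I(A:C|B)_\rho} \leq e^{-\frac{a}{2}\exp(-c\beta)d(A,C)}.
\end{equation}
Multiplying by two reproduces the claimed bound; note that both the overall factor of two and the halving of the exponent are simply artifacts of the square root inherent in the Fuchs--van de Graaf conversion.

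The only genuinely nontrivial point---and thus the step I would treat most carefully---is that a single map must simultaneously (i) come within the Fawzi--Renner fidelity of $\rho^{ABC}$ and (ii) reproduce the marginal $\rho^{BC}$ on the nose. These demands look competing, but they are reconciled precisely because the Fawzi--Renner optimizer lies in the (twirled) Petz family, for which exact marginal recovery is automatic. A minor loose end to tidy up is the short-distance regime $d(A,C)<\ell_0$, where the conjecture provides no estimate; there one either absorbs the finitely many cases into the constants $a,c,\ell_0$ or invokes the trivial bound $\|\cdot\|_1\leq 2$.
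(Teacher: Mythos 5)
Your proposal is correct and follows essentially the same route as the paper: both invoke the Fawzi--Renner recoverability theorem together with Conjecture \ref{conjecture:Markov}, the only difference being that you derive the trace-norm form $\frac{1}{4}\|\cdot\|_1^2\leq I(A:C|B)_\rho$ (which the paper cites directly from Ref.~\cite{Fawzi2015}) from the fidelity statement via Fuchs--van de Graaf, and you justify the exact marginal recovery $\Phi_B^{BC}(\rho^B)=\rho^{BC}$ from the (twirled) Petz structure rather than quoting it as part of the cited theorem. Your handling of the short-distance regime $d(A,C)<\ell_0$ is a detail the paper's proof leaves implicit.
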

\begin{proof}
  For any tripartite state $\rho^{ABC}$, there exists a CPTP map $\Phi_B^{BC}: \mathcal{B}(\mathcal{H}_B) \to \mathcal{B}(\mathcal{H}_B \otimes \mathcal{H}_C)$\cite{Fawzi2015} such that
  \begin{equation}
    \frac{1}{4} \|\rho^{ABC} - \mathcal{I}_A \otimes \Phi_B^{BC}(\rho^{AB})\|_1^2\leq I(A:C|B)_{\rho}
  \end{equation}
  and $\Phi_B^{BC}(\rho^B) = \rho^{BC}$.  The inequality follows from the upper bound on $I(A:C|B)_{\rho}$.
\end{proof}
 
Below, we show that the Gibbs state of 1D local Hamiltonian indeed has an efficient Markovian MPDO description. This is done by constructing a MPDO which approximates the thermal state, and then showing that it is in fact a Markovian MPDO.
\begin{thm}
  Under Conjecture \ref{conjecture:Markov}, for every 1D local Hamiltonian consisting of $n$ qubits, and at every temperature $\beta=\frac{1}{T} = O(1)$, for every $\eta>0 $, there exists a $O(\frac{1}{n^{\eta}})$-approximate Markovian MPDO $\rho_{\text{MPDO}}$ with $O(n^{2\eta e^{c\beta}/a })$ bond dimension and qudit dimension such that
  \begin{equation}
    \|e^{-\beta H}/ \Tr(e^{-\beta H})  - \rho_{\text{MPDO}}\|_1 =  O(\frac{1}{n^{\eta -1 }})
  \end{equation}
  \label{thm:faithfulness}
\end{thm}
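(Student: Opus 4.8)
The plan is to coarse-grain the chain into blocks of logarithmic width and to assemble $\rho_{\text{MPDO}}$ out of the Fawzi--Renner recovery channels supplied by Lemma \ref{lemma:recoverability}, one block at a time. Concretely, partition $\Lambda = \{1,\ldots,n\}$ into $m = \lceil n/\ell \rceil$ consecutive blocks $X_1,\ldots,X_m$, each of width $\ell := \lceil \frac{2\eta e^{c\beta}}{a}\ln n \rceil = \Theta(\log n)$, and regard each block as a single super-qudit of dimension $d^{\ell} = O(n^{2\eta e^{c\beta}/a})$ (this becomes the claimed bond and qudit dimension, using $d=2$ and $\ln 2 < 1$). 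For $k=1,\ldots,m-1$, apply Lemma \ref{lemma:recoverability} with $A = X_1\cdots X_{k-1}$, $B = X_k$, $C = X_{k+1}$ to obtain a channel $\Phi_{X_k}^{X_k X_{k+1}}$ satisfying $\Phi_{X_k}^{X_k X_{k+1}}(\rho^{X_k}) = \rho^{X_k X_{k+1}}$ and a recovery error of at most $2e^{-\frac{a}{2}\exp(-c\beta)\,d(A,C)} = 2e^{-\frac{a}{2}\exp(-c\beta)\ell} \le 2 n^{-\eta}$, since $d(A,C)=\ell$ and $\frac a2 e^{-c\beta}\ell = \eta\ln n$ (for $n$ large enough that $\ell \ge \ell_0$). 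I then set $\rho_{\text{MPDO}} = (\mathcal{I}_{X_1\cdots X_{m-2}}\otimes \Phi_{X_{m-1}}^{X_{m-1}X_m})\circ\cdots\circ\Phi_{X_1}^{X_1 X_2}(\rho^{X_1})$, \emph{initialized with the exact block marginal} $\rho^{X_1}$ of the Gibbs state $\rho = e^{-\beta H}/\Tr(e^{-\beta H})$.

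The crux of the argument, and the step I expect to be the main obstacle, is to certify that this object is an $\epsilon$-approximate Markovian MPDO with $\epsilon = O(n^{-\eta})$ rather than the much weaker $O(n^{1-\eta})$ that a naive comparison would yield. The danger is that the MPDO's two-block marginals drift away from the true marginals as the chain is built, accumulating error across the $\Theta(n/\log n)$ blocks. The way around this is to show that the construction reproduces the low-order marginals of $\rho$ \emph{exactly}. I would prove by induction that $\tilde\rho^{X_k} = \rho^{X_k}$ for every $k$: the base case is the initialization $\tilde\rho^{X_1}=\rho^{X_1}$, and assuming $\tilde\rho^{X_k} = \rho^{X_k}$, marginal preservation gives $\Phi_{X_k}^{X_k X_{k+1}}(\tilde\rho^{X_k}) = \Phi_{X_k}^{X_k X_{k+1}}(\rho^{X_k}) = \rho^{X_k X_{k+1}}$, so the MPDO's marginal on $X_k X_{k+1}$ is exactly $\rho^{X_k X_{k+1}}$; tracing out $X_k$ yields $\tilde\rho^{X_{k+1}} = \rho^{X_{k+1}}$, closing the induction.

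With the exact marginals in hand, the Markovian MPDO condition for an interior block $k$ reads $\|\rho^{X_{k-1}X_k} - (\mathcal{I}_{X_{k-1}}\otimes \mathcal{E}_{X_k})(\rho^{X_{k-1}X_k})\|_1 \le \epsilon$, where $\mathcal{E}_{X_k} := \Tr_{X_{k+1}}\circ\,\Phi_{X_k}^{X_k X_{k+1}}$. This now follows directly from Lemma \ref{lemma:recoverability}: tracing $X_1\cdots X_{k-2}$ and then $X_{k+1}$ out of the recovery bound $\|\rho^{X_1\cdots X_{k+1}} - \mathcal{I}_{X_1\cdots X_{k-1}}\otimes\Phi_{X_k}^{X_k X_{k+1}}(\rho^{X_1\cdots X_k})\|_1 \le 2n^{-\eta}$, and using contractivity of the partial trace (which commutes with the channel acting on $X_k X_{k+1}$), yields exactly $\|\rho^{X_{k-1}X_k} - (\mathcal{I}_{X_{k-1}}\otimes \mathcal{E}_{X_k})(\rho^{X_{k-1}X_k})\|_1 \le 2n^{-\eta}$. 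Hence $\rho_{\text{MPDO}}$ is a $2n^{-\eta}$-approximate Markovian MPDO, the certification error being the single-step recovery error, \emph{decoupled} from the global trace-distance error.

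Finally, for the trace-distance estimate I would telescope. Writing $\Phi_k := \mathcal{I}_{X_1\cdots X_{k-1}}\otimes\Phi_{X_k}^{X_k X_{k+1}}$ and $\tau_k := \Phi_{m-1}\circ\cdots\circ\Phi_k(\rho^{X_1\cdots X_k})$ for $k\le m-1$ with $\tau_m := \rho$, one has $\tau_1 = \rho_{\text{MPDO}}$, and each consecutive difference obeys $\|\tau_{k+1}-\tau_k\|_1 \le \|\rho^{X_1\cdots X_{k+1}} - \Phi_k(\rho^{X_1\cdots X_k})\|_1 \le 2n^{-\eta}$ by contractivity of the CPTP maps $\Phi_{k+1},\ldots,\Phi_{m-1}$ together with Lemma \ref{lemma:recoverability}. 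Summing over the $m-1 = O(n/\log n)$ steps gives $\|\rho - \rho_{\text{MPDO}}\|_1 = \|\tau_m - \tau_1\|_1 \le 2(m-1)n^{-\eta} = O(n^{1-\eta}) = O(n^{-(\eta-1)})$, as claimed. The only remaining bookkeeping is the possibly shorter last block and the constant $\ln d$ factor in the block dimension, both harmless and absorbed into the $O(\cdot)$.
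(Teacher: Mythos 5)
Your proposal is correct and follows essentially the same route as the paper: coarse-grain into $\Theta(\log n)$-width blocks, build the MPDO from Fawzi--Renner recovery channels (Lemma \ref{lemma:recoverability}), use the exact marginal-preservation property $\Phi_{X_k}^{X_kX_{k+1}}(\rho^{X_k})=\rho^{X_kX_{k+1}}$ to certify the Markovian condition with only the single-step error $O(n^{-\eta})$, and telescope the per-step errors to get the global $O(n^{1-\eta})$ trace-distance bound. The only cosmetic difference is that you invoke the recovery lemma separately for each triple $(X_1\cdots X_{k-1},X_k,X_{k+1})$, whereas the paper takes one channel recovering the entire right half and truncates it by a partial trace; both yield the same bounds.
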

\begin{proof}
  Choose an integer $L_0= \left \lceil{b\log n}\right \rceil $ such that $b > 2e^{c\beta}/a$. Let $\eta=\frac{ab}{2e^{c\beta}}>2 $. Relabel the qubits $(j-1)L_0 +1, (j-1)L_0 +2, \ldots, (j-1)L_0 + jL_0$ as the $j$th qudit for $jL_0 < n$, and relabel the remaining qubits as the $\left \lceil{\frac{n}{L_0}} \right \rceil$th qudit.  Then the Hamiltonian can be expressed as $H=\sum_{i=1}^{m} h_i'$, where $m=O(n/\log n)$, $h_i'$ is a hermitian operator acting on a ball of radius $O(1)$ with qudit dimension $d=\text{poly}(n)$, and $\| h_i' \| = O(\log n)$.

  Let $\rho=e^{-\beta H}/\Tr[e^{-\beta H}]$. For any $\beta =O(1)$ and any $i$, Lemma \ref{lemma:recoverability} implies that there exists a CPTP map $\Lambda_{(i)}^{(i,\ldots, m)}$ such that
  \begin{equation}
    \|\rho -  \mathcal{I}_{\{1,\ldots, i-1\}} \otimes \Lambda_{\{i\}}^{\{i,\ldots, m\}}(\rho^{\{1,\ldots,i\}}) \|_1 \leq 2n^{-\eta} \label{eq:MarkovBound}
  \end{equation}
and
\begin{equation}
  \Lambda_{\{i \}}^{\{i,\dots,m \}}(\rho^{\{ i \}}) = \rho^{\{i,\ldots, m\}}.\label{eq:MarkovEquality}
  \end{equation}
By Eq.\ref{eq:MarkovBound},
  \begin{equation}
    \begin{aligned}
      \|\rho^{\{1,\ldots,i\}} - \Tr_{i+1,\ldots,m}[\mathcal{I}_{\{1,\ldots,i-1 \}}\otimes \Lambda_{\{i-1\}}^{\{i-1,...,m\}}(\rho^{\{1,\ldots,i-1\}}) ]  \|_1
      &\leq     \|\rho -  \mathcal{I}_{\{1,\ldots,i-1 \}} \otimes \Lambda_{\{i-1\}}^{\{i-1,\ldots, m\}}(\rho^{\{1,\ldots,i-1\}}) \|_1  \\
      &\leq 2n^{-\eta}.
    \end{aligned}
  \end{equation}
  In other words, for all $i$, there exists a CPTP map $\Phi_{\{i-1\}}^{\{i-1,i\}} = \Tr_{i+1,\ldots,m}\circ \Lambda_{\{i-1\}}^{\{i-1,\ldots,m\}}$ such that $\|\rho^{\{1,...,i\}}- \mathcal{I}_{\{1,\ldots, i-2 \}}\otimes \Phi_{\{i-1\}}^{\{i-1,i\}}(\rho^{\{1,\ldots, i-1\}})\|\leq 2n^{-\eta}$. Therefore, using the fact that $m\leq n$,
  \begin{equation}
    \|\rho - \rho_{\text{MPDO}} \|_1 \leq n^{1-\eta},
  \end{equation}
  where
  \begin{equation}
    \rho_{\text{MPDO}} = (\mathcal{I}_{\{1,\ldots,m-2\}}\otimes \Phi_{\{m-1\}}^{\{m-1,m\}})\circ \ldots \circ\Phi_{\{1\}}^{\{1,2\}}(\tilde{\rho}_{\text{MPDO}}^{\{1\}}).
  \end{equation}
  Here we set $\tilde{\rho}_{\text{MPDO}}^{\{1 \}} = \rho^{\{1 \}}$, $\Phi_{\{1\}}^{\{1,2\}}(\tilde{\rho}_{\text{MPDO}}^{\{1\}}) = \rho^{\{1,2\}}$, and $\Phi_{\{i\}}^{\{i,i+1\}} = \Tr_{i+1,\ldots,m}\circ \Lambda_{\{i-1\}}^{\{i-1,\ldots,m\}}$ for $i\geq 2.$

  Now we have all the necessary ingredients for the proof. Let us first note that $\rho_{\text{MPDO}}$ is a MPDO with a bond dimension of $\text{poly}(n)$, provided that $b=O(1)$. Therefore, it suffices to show that
  \begin{equation}
    \|\Phi_{\{i-1\}}^{\{i-1,i\}}(\tilde{\rho}_{\text{MPDO}}^{\{i-1\}}) - \Tr_{i+1}[(\mathcal{I}_{\{ i-1\}} \otimes \Phi_{\{i\}}^{\{i,i+1\}})\circ \Phi_{\{i-1\}}^{\{i-1,i\}}(\tilde{\rho}_{\text{MPDO}}^{\{i-1\}}) ] \|_1 = O(\frac{1}{n^{\eta}})
  \end{equation}
  for $i=2, \cdots, n-1$, where
  \begin{equation}
    \tilde{\rho}_{\text{MPDO}}^{\{i\}} := \Tr_{1,\cdots, i-1}[( \mathcal{I}_{\{1,\ldots,i-1 \}} \otimes \Phi_{\{i-1\}}^{\{i-1,i\}})\circ\cdots\circ\Phi_{\{1\}}^{\{1,2\}}(\tilde{\rho}_{\text{MPDO}}^{\{1\}})]
  \end{equation}
  while keeping $b=O(1)$.
  
  Let us first note that $\tilde{\rho}_{\text{MPDO}}^{\{ i\}} = \rho^{\{i \}}$. This is proved by induction. The claim is trivially true for $i=1,2$. For $i>3$, 
  \begin{equation}
    \begin{aligned}
      \tilde{\rho}_{\text{MPDO}}^{\{i \}} &= \Tr_{i-1}[ \Phi_{\{i-1 \}}^{\{i-1,i \}}(\tilde{\rho}_{\text{MPDO}}^{\{i-1 \}})]\\
      &= \Tr_{i-1}[ \Phi_{\{i-1 \}}^{\{i-1,i \}}(\rho^{\{i-1 \}})] \\
      &= \Tr_{i-1, i+1,\ldots, m}[\Lambda_{\{i-1 \}}^{\{i-1,\ldots, m \}}(\rho^{\{i-1 \}})] \\
      &= \rho^{\{i \}},
    \end{aligned}
  \end{equation}
  where we used Eq.\ref{eq:MarkovEquality}

  Furthermore,
  \begin{equation}
    \begin{aligned}
      &\|\rho^{\{i-1,i \}} -  \Tr_{i+1}[(\mathcal{I}_{\{ i-1\}} \otimes \Phi_{\{i\}}^{\{i,i+1\}})\circ \Phi_{\{i-1\}}^{\{i-1,i\}}(\tilde{\rho}_{\text{MPDO}}^{\{i-1\}}) ]\|_1 \\ &= \|\rho^{\{i-1,i\}} - \Tr_{i+1}[(\mathcal{I}_{\{ i-1\}} \otimes \Phi_{\{i\}}^{\{i,i+1\}})(\rho^{\{i-1,i \}}) \|_1 \\
      &= \|\rho^{\{i-1,i \}} - \Tr_{i+1,\ldots, m}[\mathcal{I}_{\{i-1 \}} \otimes \Lambda_{\{i \}}^{\{i,\ldots,m \}}(\rho^{\{i-1,i \}})] \|_1 \\
      &\leq \|\rho - \mathcal{I}_{\{1,\ldots,i-1 \}}\otimes \Lambda_{\{i \}}^{\{i,\ldots,m \}}(\rho^{\{1,\ldots,i \}}) \|_1\\
      &\leq n^{-\eta},
    \end{aligned}
  \end{equation}
  where we used Eq.\ref{eq:MarkovBound} in the last line.

  Therefore, there exists an $O(\frac{1}{n^{\eta}})$-approximate Markovian MPDO $\rho_{\text{MPDO}}$ with $O(n^{b})$-bond dimension and qudit dimension, such that $\|\rho - \rho_{\text{MPDO}} \|_1 =O(n^{1-\eta})$ where $\eta = ab/(2e^{c\beta})$.
\end{proof}

We would like to point out that Kato and Brand\~ao proved a theorem that is slightly weaker than Conjecture \ref{conjecture:Markov}. The main difference is that we require the Markov chain property to hold with an exponentially decaying error, whereas these authors proved the property with a subexponentially decaying error. It will be interesting to be able to prove our conjecture by sharpening the tools developed by these authors.

\subsection{Finite precision approximation}
It is important to note that Theorem \ref{thm:faithfulness} cannot be directly used to show that 1D free energy problem is in NP under Conjecture \ref{conjecture:Markov}. The reason is that the CPTP maps that define $\rho_{\text{MPDO}}$ may require an exponentially long precision. In order to ensure that polynomial number of bits of information is sufficient, we need to invoke a few elementary facts.

\begin{lem}
  For any $d\times d$ density matrix $\rho$, there exists a density matrix $\tilde{\rho}$, specified with $O(d^2n)$ bits, such that $\|\rho - \tilde{\rho} \|_1 \leq O(d^22^{-n})$. Furthermore, the fact that $\tilde{\rho}$ is a normalized density matrix canbe verified in $\text{poly}(n,d)$ time. \label{lem:finite_precision}
\end{lem}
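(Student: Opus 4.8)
The plan is to produce $\tilde\rho$ by entrywise rounding, then repair the two things rounding can destroy (positivity and the trace), and finally certify the result by an \emph{exact} rational computation rather than any eigenvalue estimate. First I would fix a precision parameter $k$ (to be set at the end) and round the real and imaginary part of each entry of $\rho$ to the nearest multiple of $2^{-k}$, doing so symmetrically on the upper and lower triangle (round $\rho_{ij}$ for $i<j$ and set the $(j,i)$ entry to its conjugate, round the real diagonal entries to real multiples of $2^{-k}$). Call the result $\hat\rho$. By construction $\hat\rho$ is Hermitian and finite-precision. Since every entry of $\rho-\hat\rho$ has modulus $O(2^{-k})$, one has $\|\rho-\hat\rho\|_F = O(d\,2^{-k})$ and hence $\|\rho-\hat\rho\|_1 \le \sqrt{d}\,\|\rho-\hat\rho\|_F = O(d^{3/2}2^{-k})$. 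The same bound on the operator norm gives, by Weyl's inequality, $\lambda_{\min}(\hat\rho) \ge \lambda_{\min}(\rho) - \|\rho-\hat\rho\| \ge -O(d\,2^{-k})$, so $\hat\rho$ may fail to be positive semidefinite, and its trace is only $\Tr(\rho) + O(d\,2^{-k}) = 1 + O(d\,2^{-k})$ rather than exactly $1$.

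To repair positivity I would add a small shift $\mu I$ with $\mu = \lceil C d\rceil\, 2^{-k}$ for a large enough universal constant $C$, so that $\mu$ is itself a multiple of $2^{-k}$ and $\hat\rho + \mu I \succ 0$ strictly. To repair the trace I then simply renormalize, setting
\begin{equation}
  \tilde\rho = \frac{\hat\rho + \mu I}{\Tr(\hat\rho + \mu I)}, \qquad \Tr(\hat\rho + \mu I) = \Tr(\hat\rho) + d\mu .
\end{equation}
This $\tilde\rho$ is manifestly positive definite with trace exactly $1$, hence a genuine density matrix, and it is specified by the finite-precision data $(\hat\rho,\mu)$ alone (the normalizing scalar is finite-precision, so no infinite-precision division is ever stored). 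For the accuracy, $\|\hat\rho-\tilde\rho\|_1$ is controlled by the smallness of $\mu$ and of $\Tr(\hat\rho)-1$: writing $\tilde\rho-\hat\rho = \big(\mu I - (\Tr(\hat\rho)-1+d\mu)\hat\rho\big)/\Tr(\hat\rho+\mu I)$ gives $\|\hat\rho-\tilde\rho\|_1 = O(d\mu) + O(d^2 2^{-k}) = O(d^2 2^{-k})$, so by the triangle inequality $\|\rho-\tilde\rho\|_1 = O(d^{3/2}2^{-k}) + O(d^2 2^{-k}) = O(d^2 2^{-k})$. Choosing $k = n$ yields $\|\rho-\tilde\rho\|_1 = O(d^2 2^{-n})$, and since $\hat\rho$ consists of $d^2$ entries each stored with $O(k)=O(n)$ bits (plus the single scalar $\mu$), the total description length is $O(d^2 n)$ bits.

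For the verification I would run three checks on the stored pair $(\hat\rho,\mu)$: that $\hat\rho$ is Hermitian (an $O(d^2)$ entrywise comparison), that the normalizing scalar $\Tr(\hat\rho)+d\mu$ is computed correctly (a sum of $d$ rationals), and that $\hat\rho + \mu I \succ 0$. Positivity and unit trace of $\tilde\rho$ then follow immediately. The delicate point, and what I expect to be the main obstacle, is certifying positivity efficiently and exactly: rounding genuinely destroys positive semidefiniteness, eigenvalues of a rational matrix are not exactly computable, and a direct PSD certificate via nonnegativity of \emph{all} principal minors would require exponentially many determinants. This is precisely why the construction shifts $\hat\rho$ by $\mu I$ to make it \emph{strictly} positive definite: strict positive-definiteness of a Hermitian rational matrix is equivalent to positivity of its $d$ \emph{leading} principal minors (Sylvester's criterion), and each such minor is the determinant of a submatrix of $O(n)$-bit rationals, computable exactly in $\mathrm{poly}(n,d)$ time by fraction-free (Bareiss-type) Gaussian elimination, which keeps the intermediate bit-length polynomial. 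Thus the whole verification runs in $\mathrm{poly}(n,d)$ time.
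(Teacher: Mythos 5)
Your proposal is correct, but it takes a genuinely different route from the paper. The paper rounds the \emph{spectral} data: it writes $\rho=\sum_j p_j\ketbra{\psi_j}{\psi_j}$, truncates the amplitudes of each eigenvector and the eigenvalues to $n$ bits, and fixes the last weight so that the trace is exactly one; positive semidefiniteness of the resulting $\tilde\rho$ is then ``manifest from the construction'' because the certificate is handed over as a nonnegative combination of rank-one projectors, so the verifier only needs to check nonnegativity of the listed coefficients and the trace. You instead round the \emph{entries}, which forces you to confront the fact that entrywise rounding destroys positivity; your repair (a shift by $\mu I$ with $\mu=\lceil Cd\rceil 2^{-k}$ followed by exact renormalization) and your certification (Sylvester's criterion on the leading principal minors, evaluated exactly by fraction-free elimination on the $O(n)$-bit rational entries) are both sound, and your error bookkeeping $\|\rho-\tilde\rho\|_1=O(d^{3/2}2^{-k})+O(d^2 2^{-k})$ lands on the same $O(d^2 2^{-n})$ bound with $k=n$. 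What each approach buys: the paper's ensemble representation makes verification essentially free but presupposes access to the (generally irrational) eigendecomposition in the existence argument and requires the certificate to be presented in that special form; your version is more self-contained and certifies positivity of an arbitrary finite-precision Hermitian matrix, at the cost of the extra machinery (Weyl perturbation, Sylvester, Bareiss) needed to make that certification rigorous and polynomial-time. Both establish the lemma as stated.
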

\begin{proof}
  Without loss of generality, let $\rho = \sum_j p_j \ketbra{\psi_j}{\psi_j}$, where $\ket{\psi_j}$ are the eigenstates of $\rho$ with eigenvalues $p_j$ such that $p_{j} \leq p_{j+1}$. Also, let $\ket{\psi_j} = \sum_{k=1}^{d} a_{jk} \ket{k}$ for complex numbers $a_{jk}$ and some basis $\{ \ket{j}\}$.  Consider the following subnormalized states
  \begin{equation}
    \ket{\tilde{\psi}_j} = \sum_{k=1}^{d} \tilde{a}_{jk}\ket{k},
  \end{equation}
  where $\tilde{a}_{jk} = f_n(\Re(a_{jk})) + i f_n(\Im(a_{jk}))$ and $f_n(x)$ is a function that keeps the first $n$ bits of a real number of $x$. The norm of $\ket{\tilde{\psi}_j}$ obeys the following bound:
  \begin{equation}
    1- d 4^{-n}\leq \braket{\tilde{\psi}_j}{\tilde{\psi}_j} \leq 1.
  \end{equation}
  Now, let $\tilde{p}_j = f_n(p_j)$ for $j<d$ and $\tilde{p}_d$ is the following rational number:
  \begin{equation}
    \tilde{p}_d = 1-\sum_{k=1}^{d-1} \tilde{p}_j /\braket{\tilde{\psi}_j}{\tilde{\psi}_j}.
  \end{equation}
  Setting $\tilde{\rho} = \sum_{j} \tilde{p}_j \ketbra{\tilde{\psi}_j}{\tilde{\psi}_j}$,
  \begin{equation}
    \begin{aligned}
    \|\rho - \tilde{\rho} \|_1 &\leq \sum_j \|p_j \ketbra{\psi_j}{\psi_j} - \tilde{p}_j \ketbra{\tilde{\psi}_j}{\tilde{\psi}_j} \|_1 \\
    &\leq \sum_j p_j \|\ketbra{\psi_j}{\psi_j} - \ketbra{\tilde{\psi}_j}{\tilde{\psi}_j} \|_1 + \sum_j |p_j - \tilde{p}_j| \\
    &\leq \sum_{j,k,l}j p_j |a_{jk}a^*_{jl} - \tilde{a}_{jk}\tilde{a}^*_{jl}| + \sum_j |p_j - \tilde{p}_j| \\
    &\leq d^2(2^{1-n} + 4^{-n} ) + 2^{1-n} \\
    &\leq 5d^2 2^{-n}.
  \end{aligned} 
\end{equation}
  The fact that $\tilde{\rho}$ is positive semi-definite is manifest from the construction, and the fact that it is normalized can be verified efficiently because all the entries are rational numbers specified up to $n$ bits.
\end{proof}

By using the Jamio\l{}kowski isomorphism\cite{Jamiolkowski1972, Choi1975}, a CPTP map can be specified by a state(on an extended Hilbert space) and vice versa. Specifically, given a CPTP map $\Phi: \mathcal{B}(\mathcal{H}_A) \to \mathcal{B}(\mathcal{H}_B)$, its Jamio\l{}kowski state $\rho_{\Phi}$ is defined as
\begin{equation}
  \rho_{\Phi} = \mathcal{I}_{A'} \otimes \Phi(\ketbra{\Psi_{A'A}}{\Psi_{A'A}})
\end{equation}
where $\ket{\Psi_{A'A}}$ is a maximally entangled state between $A'$ and $A$ with $\dim A' = \dim A$. The following bound is known\cite{Nechita2016}(Proposition 1):
\begin{equation}
  \|\Phi - \Phi' \|_{\diamond} \leq  \|\Tr_B |\rho_{\Phi} - \rho_{\Phi'}| \|,
\end{equation}
where $\Phi,\Phi' : \mathcal{B}(\mathcal{H}_A) \to \mathcal{B}(\mathcal{H}_B)$ are CPTP maps, $\| \cdot \|_{\diamond}$ is the diamond norm\cite{Kitaev2002}, and $\|\cdot \|$ is the operator norm. Here $| O | = V|D|V^{\dagger}$ for any hermitian operator which can be diagonalized as $O=VDV^{\dagger}$, where $|D|$ is an entry-wise absolute value of $D$. This bound implies that
\begin{equation}
  \|\Phi - \Phi' \|_{\diamond} \leq d_A \|\rho_{\Phi} - \rho_{\Phi'} \|_1. \label{eq:channel_state_bound}
\end{equation}

\begin{thm}
  Under Conjecture \ref{conjecture:Markov}, for every 1D local Hamiltonian consisting of $n$ qubits, and at every temperature $\beta = \frac{1}{T} = O(1)$, for every $\eta>0$, there exists a $O(\frac{1}{n^{\eta}})$-approximate Markovian MPDO $\tilde{\rho}_{\text{MPDO}}$ with $O(n^{2\eta e^{c\beta}/a})$ bond dimension and qudit dimension with $\text{poly}(n)$ precision such that
  \begin{equation}
    \|e^{-\beta H}/ \Tr(e^{-\beta H}) - \tilde{\rho}_{\text{MPDO}} \|_1 = O(\frac{1}{n^{\eta - 1}}).
  \end{equation}
  \label{thm:finite_precision}
\end{thm}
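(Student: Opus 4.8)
The plan is to start from the infinite-precision Markovian MPDO $\rho_{\text{MPDO}}$ produced by Theorem \ref{thm:faithfulness} and round each of its defining objects---the initial state $\tilde{\rho}_{\text{MPDO}}^{\{1\}} = \rho^{\{1\}}$ and the CPTP maps $\Phi_{\{i\}}^{\{i,i+1\}}$---to $\text{poly}(n)$ bits of precision, calling the result $\tilde{\rho}_{\text{MPDO}}$. Since the bond and qudit dimensions are $d = O(n^{2\eta e^{c\beta}/a}) = \text{poly}(n)$, every object lives on a space of dimension $\text{poly}(n)$, so Lemma \ref{lem:finite_precision} applies with the number of retained bits taken to be a large polynomial $p(n)$. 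I will fix $p(n)$ at the very end so that each rounding error, after being amplified by the $\text{poly}(n)$-dimensional factors and summed over the length-$m$ chain, is still exponentially small in $n$.

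The initial state needs nothing beyond a direct application of Lemma \ref{lem:finite_precision}. For each CPTP map I pass to its Jamio\l{}kowski state $\rho_{\Phi_i}$, a density matrix on a $d^3$-dimensional space, and round it with Lemma \ref{lem:finite_precision}. The main obstacle is that the rounded matrix, though positive and normalized, need not be the Jamio\l{}kowski state of a \emph{CPTP} map: it may violate the trace-preserving constraint $\Tr_{\{i,i+1\}} \rho_{\Phi_i} = I/d$, and exact complete positivity and trace preservation are needed for the downstream entropy results (Lemma \ref{lemma:QMC_local_characterization} uses both the contractivity $\|\Phi(\cdot)\|_1 \leq \|\cdot\|_1$ and monotonicity of relative entropy). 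I will restore the constraint without spoiling the finite precision. Concretely, I first mix the rounded Choi state with the maximally mixed state by an exponentially small dyadic weight $q$; this preserves the partial-trace condition of the depolarizing channel and pushes the smallest eigenvalue up to $\geq q/\text{poly}(n)$. I then add $\delta \otimes I/d$, where $\delta$ is the (finite-precision, exponentially small) defect between $I/d$ and the current partial trace on the input copy. A short computation shows this addition exactly enforces $\Tr_{\{i,i+1\}} = I/d$, keeps the trace equal to $1$, keeps all entries rational with $\text{poly}(n)$ bits, and preserves positivity provided $q$ dominates $\|\delta\|$---which I arrange by choosing the defect much smaller than $q$ and $q$ itself exponentially small. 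The outcome is an exactly CPTP map $\tilde{\Phi}_i$ with $\text{poly}(n)$ precision whose Choi state is within exponentially small trace distance of $\rho_{\Phi_i}$, so by Eq.\ref{eq:channel_state_bound} the diamond distance $\|\Phi_i - \tilde{\Phi}_i\|_{\diamond} \leq d\,\|\rho_{\Phi_i} - \tilde{\rho}_{\Phi_i}\|_1$ is exponentially small.

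With exactly-CPTP finite-precision pieces in hand, I propagate the errors by a telescoping argument, swapping one tensor at a time. Using that the diamond norm is subadditive under composition and invariant under tensoring with the identity, together with $\|\Phi(\rho) - \tilde{\Phi}(\tilde{\rho})\|_1 \leq \|\rho - \tilde{\rho}\|_1 + \|\Phi - \tilde{\Phi}\|_{\diamond}$ for CPTP $\Phi$, the total error obeys
\begin{equation}
  \|\rho_{\text{MPDO}} - \tilde{\rho}_{\text{MPDO}}\|_1 \leq \|\rho^{\{1\}} - \tilde{\rho}^{\{1\}}\|_1 + \sum_{i} \|\Phi_i - \tilde{\Phi}_i\|_{\diamond},
\end{equation}
a sum of $m \leq n$ exponentially small terms. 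Choosing $p(n)$ a sufficiently large polynomial makes this negligible against $O(1/n^{\eta-1})$, so the triangle inequality with Theorem \ref{thm:faithfulness} yields $\|e^{-\beta H}/\Tr(e^{-\beta H}) - \tilde{\rho}_{\text{MPDO}}\|_1 = O(1/n^{\eta-1})$.

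Finally I check that $\tilde{\rho}_{\text{MPDO}}$ is still an $O(1/n^{\eta})$-approximate Markovian MPDO. Each defining inequality is a fixed trace-norm expression in the tensors, Lipschitz in them with $\text{poly}(n)$ constants because all maps are CPTP contractions on $\text{poly}(n)$-dimensional spaces; hence perturbing every tensor by an exponentially small amount perturbs each inequality's left-hand side by an exponentially small amount. Since the exact $\rho_{\text{MPDO}}$ satisfies each inequality with bound $n^{-\eta}$ (as established in the proof of Theorem \ref{thm:faithfulness}), the rounded object satisfies each with $n^{-\eta} + (\text{exponentially small}) = O(1/n^{\eta})$. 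The only genuinely delicate point is the CPTP-preserving rounding of the maps; everything else is bookkeeping of exponentially small errors.
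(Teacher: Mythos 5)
Your proposal is correct and follows essentially the same route as the paper: round the initial state and the Jamio\l{}kowski states of the channels via Lemma \ref{lem:finite_precision}, convert to diamond-norm bounds via Eq.\ref{eq:channel_state_bound}, telescope the $m\leq n$ exponentially small errors, and conclude by the triangle inequality with Theorem \ref{thm:faithfulness}. In fact you are more careful than the paper on two points it glosses over --- restoring exact trace preservation of the rounded Choi matrices (the paper implicitly assumes the rounded state is still the Choi state of a CPTP map) and explicitly re-verifying the Markovian MPDO inequalities for the perturbed tensors --- and both of your fixes are sound.
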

\begin{proof}
  By Theorem \ref{thm:faithfulness}, there exists a polynomial-size bond dimension Markovian MPDO $\rho_{\text{MPDO}}$ such that
  \begin{equation}
    \|e^{-\beta H} / \Tr(e^{-\beta H}) - \rho_{\text{MPDO}} \|_1 = O(\frac{1}{n^{\eta -1}}).
  \end{equation}
  Let us define $\tilde{\rho}_{\text{MPDO}}$ as
  \begin{equation}
    \tilde{\rho}_{\text{MPDO}} = (\mathcal{I}_{\{1,\ldots, n-2 \}} \otimes \tilde{\Phi}_{\{n-1 \}}^{\{n-1,n \}})\circ \cdots \circ \tilde{\Phi}_{\{ 1 \}}^{\{ 1,2 \}}(\tilde{\rho}^{(1)}),
  \end{equation}
  where the CPTP maps and the operators are defined in terms of the CPTP maps and the operators that define $\rho_{\text{MPDO}}$. Specifically, let
  \begin{equation}
    \rho_{\text{MPDO}} = (\mathcal{I}_{\{1,\ldots, n-2 \}} \otimes \Phi_{\{n-1 \}}^{\{n-1,n \}})\circ \cdots \circ \Phi_{\{ 1 \}}^{\{ 1,2 \}}(\rho^{(1)}).
  \end{equation}
  Because of Lemma \ref{lem:finite_precision}, there exists a density matrix $\tilde{\rho}^{(1)}$ which is specified with $\text{poly}(n)$ bits such that $\|\rho^{(1)} - \tilde{\rho}^{(1)} \|_1 \leq O(e^{-\text{poly}(n)})$. For each $\Phi_{\{i-1 \}}^{\{i-1,i \}}$, let $\rho_{\Phi_{\{i-1 \}}^{\{i-1,i \}}}$ be its Jamio\l{}kowski state. Again by Lemma \ref{lem:finite_precision} there exists a state $\tilde{\rho}_{\Phi_{\{i-1 \}}^{\{i-1,i \}}}$ specified with $\text{poly}(n)$ precision such that $\|\rho_{\Phi_{\{i-1 \}}^{\{i-1,i \}}} - \tilde{\rho}_{\Phi_{\{i-1 \}}^{\{i-1,i \}}}\|_1 = O(e^{-\text{poly}(n)}).$ By Eq.\ref{eq:channel_state_bound}, this bound implies that $\|\Phi_{\{i-1 \}}^{\{i-1,i\}} -  \tilde{\Phi}_{\{i-1 \}}^{\{i-1,i \}}\|_{\diamond} = O(e^{-\text{poly}(n)})$, which subsequently implies that
  \begin{equation}
    \|\rho_{\text{MPDO}} - \tilde{\rho}_{\text{MPDO}}\|_1 = O(e^{-\text{poly}(n)}).
  \end{equation}
  Therefore,
  \begin{equation}
    \begin{aligned}
      \|e^{-\beta H}/ \Tr(e^{-\beta H}) - \tilde{\rho}_{\text{MPDO}} \|_1 & \leq \|e^{-\beta H}/\Tr(e^{\beta H}) -\rho_{\text{MPDO}}\|_1 + \|\rho_{\text{MPDO}} - \tilde{\rho}_{\text{MPDO}} \|_1 \\
      & = O(\frac{1}{n^{\eta-1}}) + O(e^{-\text{poly(n)}}).
    \end{aligned}
  \end{equation}
\end{proof}

\subsection{Efficiently computable upper bound on free energy}
We show that there exists a Markivan MPDO with a polynomially large bond dimension that approximates the free energy of the Gibbs state up to an inverse polynomial error. This implies that for inverse polynomially small promise gap, there exists a polynomial-size classical proof of the ``YES'' instance that can be verified(in polynomial-time).
\begin{lem}
 Under Conjecture \ref{conjecture:Markov}, for any 1D local Hamiltonian $H$ and temperature $T$, for any constant $\alpha >0$, there exists a $O(\frac{1}{n^{\alpha+2}})$-approximate Markovian MPDO $\rho_{\text{MPDO}}$ with $\text{poly}(n)$ bond dimension specified with polynomial precision such that
  \begin{equation}
    \Tr(\rho H) - TS(\rho) \leq \Tr(\rho_{\text{MPDO}} H) -TS(\rho_{\text{MPDO}})+  \frac{CT}{n^{\alpha}},
  \end{equation}
  for some constant $C$,  where $\rho = e^{-\beta H} / \Tr(e^{-\beta H})$.
  \label{lemma:free_energy_bound}
\end{lem}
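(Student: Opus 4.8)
The plan is to invoke Theorem~\ref{thm:finite_precision} with the precise choice $\eta = \alpha + 2$ and then control the difference of free energies term by term. With this choice the theorem supplies a finite-precision, $O(1/n^{\alpha+2})$-approximate Markovian MPDO $\rho_{\text{MPDO}}$ of polynomial bond dimension (and polynomial qudit dimension, specified to polynomial precision) satisfying
\[
  \|\rho - \rho_{\text{MPDO}}\|_1 = O\!\left(\tfrac{1}{n^{\alpha+1}}\right),
\]
where $\rho = e^{-\beta H}/\Tr(e^{-\beta H})$. It therefore suffices to bound $|F(\rho) - F(\rho_{\text{MPDO}})|$, with $F(\sigma) = \Tr(\sigma H) - TS(\sigma)$, by separately estimating the energy and the entropy contributions; this two-sided bound implies in particular the stated inequality.

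For the energy contribution I would use H\"older's inequality together with the locality of $H$: since $H = \sum_i h_i$ with $\|h_i\|\le 1$, one has $\|H\| \le \sum_i \|h_i\| = O(n)$, so
\[
  |\Tr((\rho - \rho_{\text{MPDO}})H)| \le \|H\|\,\|\rho - \rho_{\text{MPDO}}\|_1 = O(n)\cdot O(1/n^{\alpha+1}) = O(1/n^\alpha).
\]
For the entropy contribution I would apply the Fannes--Alicki inequality~\cite{Alicki2004} to the pair $\rho,\rho_{\text{MPDO}}$. Since the total Hilbert space has dimension $D$ with $\log D = O(n)$, this gives
\[
  |S(\rho) - S(\rho_{\text{MPDO}})| \le \|\rho - \rho_{\text{MPDO}}\|_1 \log D + H_b(\|\rho - \rho_{\text{MPDO}}\|_1) = O(n)\cdot O(1/n^{\alpha+1}) + o(1/n^\alpha) = O(1/n^\alpha),
\]
where the binary-entropy remainder is negligible because its argument is $O(1/n^{\alpha+1})$.

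Combining the two estimates yields
\[
  |F(\rho) - F(\rho_{\text{MPDO}})| \le |\Tr((\rho - \rho_{\text{MPDO}})H)| + T\,|S(\rho) - S(\rho_{\text{MPDO}})| = O(1/n^\alpha) + T\cdot O(1/n^\alpha).
\]
Since $\beta = 1/T = O(1)$ keeps $T$ bounded below by a positive constant, the $T$-free first term can be absorbed as $O(T/n^\alpha)$, so the whole expression is at most $CT/n^\alpha$ for a constant $C$ depending only on $T,d$ and the constants $a,c$ of Conjecture~\ref{conjecture:Markov}. This gives the claimed bound (in both directions).

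I expect the main obstacle to be the extensive dimension factor $\log D = O(n)$ in the Fannes--Alicki estimate. It is precisely this factor that dictates the exponent: the trace distance must be made $O(1/n^{\alpha+1})$, one power of $n$ smaller than the target accuracy $1/n^\alpha$, so that multiplication by $\log D$ still leaves an $O(1/n^\alpha)$ error. This is exactly why the lemma requires an $O(1/n^{\alpha+2})$-approximate MPDO (equivalently $\eta=\alpha+2$ in Theorem~\ref{thm:finite_precision}) rather than merely an $O(1/n^\alpha)$-approximate one. A secondary point to verify is that the finite-precision $\rho_{\text{MPDO}}$ is a bona fide positive, normalized density matrix so that $S(\rho_{\text{MPDO}})$ is well defined and Fannes--Alicki applies; if the construction yields only a subnormalized operator, this is repaired by renormalizing at an $O(e^{-\text{poly}(n)})$ cost, which is absorbed into the error terms above.
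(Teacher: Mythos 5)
Your proposal follows essentially the same route as the paper: invoke Theorem~\ref{thm:finite_precision} with $\eta=\alpha+2$ to get trace distance $O(1/n^{\alpha+1})$, bound the energy difference by H\"older and the entropy difference by Fannes-type continuity with $\log D = O(n)$, and combine. If anything you are slightly more careful than the paper's own proof, which omits the $\|H\|=O(n)$ factor in the energy estimate and does not remark on absorbing the $T$-independent term into $CT/n^{\alpha}$; both points are harmless here and your treatment of them is correct.
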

\begin{proof}
  Because of Theorem \ref{thm:finite_precision}, there exists a $O(\frac{1}{n^{\eta}})$-approximate Markovian MPDO $\rho_{\text{MPDO}}$ with $O(n^{2\eta e^{c\beta}/a})$-bond dimension and qudit dimension with polynomial precision such that $\delta = \|\rho - \rho_{\text{MPDO}} \|_1  = O(\frac{1}{n^{\eta - 1}})$ for everey $\eta$. Let us choose $\eta = 2+\alpha$. Because $H$ is a local Hamiltonian,
  \begin{equation}
    \begin{aligned}
      \Tr(\rho H - \rho_{\text{MPDO}} H) &\leq \|\rho - \rho_{\text{MPDO}} \|_1 \\
      &\leq n^{-(1+\alpha)}.
    \end{aligned}
  \end{equation}
  Due to the continuity of entropy\cite{Fannes1973},
  \begin{equation}
    \begin{aligned}
      |S(\rho) - S(\rho_{\text{MPDO}})| &\leq 2\delta n \log d - 2\delta \log (2\delta) \\
      &\leq 2n^{-\alpha}\log d  + O(n^{-1-\alpha} \log n).
    \end{aligned}
  \end{equation}
  These bounds yield the claim.
\end{proof}

\begin{thm}
  Under Conjecture \ref{conjecture:Markov}, the free energy problem for 1D local Hamiltonian at temperature $T= O(1)$ is in NP.
  \label{thm:NP}
\end{thm}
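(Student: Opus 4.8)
The plan is to place the problem in NP by exhibiting a polynomial-size classical witness for every YES instance together with a polynomial-time verifier, and then arguing separately for completeness and soundness. The witness I would use is a finite-precision description of a Markovian MPDO over the coarse-grained chain of Theorem~\ref{thm:finite_precision}: the initial operator $\tilde{\rho}^{\{1\}}$ together with the CPTP maps $\Phi_{\{i-1\}}^{\{i-1,i\}}$, each encoded through its Jamio\l{}kowski state with $\text{poly}(n)$ bits, so that the whole object has polynomial bond dimension, polynomial qudit dimension $d=\text{poly}(n)$, and polynomial description length. On input such a witness the verifier would: (i) check that the initial operator is a genuine density matrix and that each map is CPTP (positivity of the Choi matrix and the trace condition, both checkable in $\text{poly}(n,d)$ time, using Lemma~\ref{lem:finite_precision} to handle the finite precision), so that the encoded $\rho_{\text{MPDO}}$ is an actual state; (ii) confirm via Proposition~\ref{proposition:MMPDO_verification} that it is an $\epsilon$-approximate Markovian MPDO with $\epsilon=1/\text{poly}(n)$; (iii) compute the energy $\Tr(\rho_{\text{MPDO}}H)$ by contracting the one- and two-site reduced states against the $O(\log n)$-local terms of $H$, and compute the entropy through the decomposition of Theorem~\ref{thm:entropy_decomposition}, which only requires diagonalizing $\text{poly}(n)$-dimensional reduced states on one and two coarse-grained sites. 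The verifier accepts iff the resulting estimate $\tilde F$ of $\Tr(\rho_{\text{MPDO}}H)-TS(\rho_{\text{MPDO}})$ falls below the threshold $\theta=(\alpha+\beta)/2$. Every step runs in $\text{poly}(n)$ time.

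For completeness, suppose $F\le\alpha$. Here Conjecture~\ref{conjecture:Markov} enters: through Theorem~\ref{thm:finite_precision} and Lemma~\ref{lemma:free_energy_bound} there is a finite-precision, polynomial-bond-dimension Markovian MPDO $\rho_{\text{MPDO}}$ with $\|\rho-\rho_{\text{MPDO}}\|_1=O(n^{1-\eta})$ for a free exponent $\eta$ I take large. Because energy is linear and the von Neumann entropy is continuous (Fannes--Alicki), the true free-energy functional evaluated on this state satisfies $|\,\Tr(\rho_{\text{MPDO}}H)-TS(\rho_{\text{MPDO}})-F\,|=O(T\,n^{2-\eta})$, so in particular it is an upper bound on $F$ up to $O(T/n^{\eta-2})$. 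Taking this $\rho_{\text{MPDO}}$ as the witness, Theorem~\ref{thm:entropy_decomposition} guarantees that the entropy the verifier actually computes differs from $S(\rho_{\text{MPDO}})$ by at most $(n-2)(4\epsilon\log d+2H_b(2\epsilon))=O(n^{1-\eta}\log n)$, whence $\tilde F\le\alpha+O(T\,n^{2-\eta})$. Choosing $\eta$ larger than the promise-gap exponent makes this correction smaller than $(\beta-\alpha)/2$, forcing $\tilde F<\theta$ and acceptance.

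For soundness, suppose $F>\beta$. The crucial observation is the Gibbs variational principle: the minimum free energy $F$ is attained by the Gibbs state, so \emph{every} density matrix has free-energy functional value at least $F$. Since step~(i) certifies that any accepted witness encodes a genuine state $\rho_{\text{MPDO}}$, we have $\Tr(\rho_{\text{MPDO}}H)-TS(\rho_{\text{MPDO}})\ge F>\beta$. Step~(ii) certifies it is an $\epsilon$-approximate Markovian MPDO, so Theorem~\ref{thm:entropy_decomposition} again bounds $|\tilde F-(\Tr(\rho_{\text{MPDO}}H)-TS(\rho_{\text{MPDO}}))|\le T(n-2)(4\epsilon\log d+2H_b(2\epsilon))$. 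With the same choice of $\epsilon$ this error is below $(\beta-\alpha)/2$, so $\tilde F>\beta-(\beta-\alpha)/2=\theta$ and the verifier rejects regardless of which witness is offered. I would emphasize that soundness relies only on the variational characterization of $F$ and on the accuracy of the decomposition for Markovian MPDOs; it does \emph{not} use the Conjecture, which is invoked solely to guarantee existence of a good witness in the completeness step.

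The hard part is the error budget rather than any single inequality. One must choose the Markovian parameter $\epsilon$, the bond/qudit dimension exponent, and the encoding precision of Lemma~\ref{lem:finite_precision} \emph{simultaneously} so that the MPDO-to-Gibbs approximation error propagated into the free energy, the accumulated decomposition error $(n-2)(4\epsilon\log d+2H_b(2\epsilon))$, and the rounding error are each smaller than half the promise gap $1/\text{poly}(n)$, while keeping the witness polynomial in size and polynomial-time checkable. Since every contribution scales as $T$ (with $T=O(1)$) times an inverse polynomial whose exponent is controlled by $\eta$, and since the bond dimension $O(n^{2\eta e^{c\beta}/a})$ stays polynomial for any constant $\eta$ and $\beta$, the balancing is achievable; reconciling the quantitative bounds of Theorems~\ref{thm:entropy_decomposition} and~\ref{thm:finite_precision} with the inverse-polynomial promise gap is precisely the place where care is required.
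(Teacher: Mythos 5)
Your proposal is correct and follows essentially the same route as the paper: the witness is the finite-precision Markovian MPDO from Theorem~\ref{thm:finite_precision} and Lemma~\ref{lemma:free_energy_bound}, the verifier checks the Markov conditions via Proposition~\ref{proposition:MMPDO_verification} and evaluates the free energy through the decomposition of Theorem~\ref{thm:entropy_decomposition}, and soundness rests on the Gibbs variational principle. Your write-up is in fact more explicit than the paper's about the completeness/soundness split and the error budget, but it is the same argument.
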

\begin{proof}
  Recall that the minimum of $\Tr(\rho H) - TS(\rho)$ is achieved by the Gibbs state, $\rho = e^{-\beta H} / \Tr(e^{-\beta H})$. By Lemma \ref{lemma:free_energy_bound}, for any constant $\alpha>0$, there exists a $\frac{1}{\text{poly}(n)}$-approximate Markovian MPDO, denoted as $\rho_{\text{MPDO}}$, such that its free energy differs from the minimum free energy by at most $CTn^{-\alpha}$.

  By Theorem \ref{thm:entropy_decomposition}, the entropy of $\rho_{\text{MPDO}}$ can be approximated by a linear combination of entropies over $1-$ and $2-$body density matrices up to $O(\frac{1}{n^{1+\alpha}})$ error. Because $\rho_{\text{MPDO}}$ is a MPDO with polynomially large bond dimension and qudit dimension, and because each of the entries are specified with polynomial precision, these density matrices can be computed in polynomial time. Thus, the following upper bound can be computed in polynomial time:
  \begin{equation}
    \Tr(\rho H) - TS(\rho) \leq \Tr(\rho_{MPDO} H)- T((\sum_{i=2}^{m} S(\rho_{\text{MPDO}}^{\{i-1,i \}})) - (\sum_{i=2}^{m-1}S(\rho_{\text{MPDO}}^{\{ i\}}))) + (m-2) (4\epsilon L_0 \log d + 2H_b(2\epsilon) ),
  \end{equation}
  where $m$ and $L_0$ are the ones that appear in the proof of Theorem \ref{thm:faithfulness} and
  \begin{equation}
    \epsilon = \max_i 
    \|\Phi_{\{i-1\}}^{\{i-1,i\}}(\tilde{\rho}_{\text{MPDO}}^{\{i-1\}}) - \Tr_{i+1}[(\mathcal{I}_{\{ i-1\}} \otimes \Phi_{\{i\}}^{\{i,i+1\}})\circ \Phi_{\{i-1\}}^{\{i-1,i\}}(\tilde{\rho}_{\text{MPDO}}^{\{i-1\}}) ] \|_1.
  \end{equation}
  These numbers, $\epsilon, m,$ and $L_0$ can be all computed in polynomial time and the resulting bound yields a $\frac{1}{\text{poly}(n)}$ error.

  Thus, there exists a Markovian MPDO which approximates the minimum free energy up to $\frac{1}{\text{poly}(n)}$ error. This upper bound can be certified efficiently because the fact that $\rho_{\text{MPDO}}$ is a $\frac{1}{\text{poly}(n)}$-approximate Markovian MPDO can be verified efficiently; see Proposition \ref{proposition:MMPDO_verification}.
\end{proof}

\section{Conclusion}
We found that the so called Markovian MPDO has a very special property: that its entropy can be efficiently computed. An important insight was that one can certify the fact that the global state forms a Markov chain from a set of local data. This is why we could show that the global entropy has a local decomposition and certify this fact efficiently.

We introduced the free energy problem, which is a natural generalization of the ground state energy problem to finite temperature. We showed that, modulo a certain reasonable conjecture, the 1D free energy problem at finite temperature is in NP. Unlike the ground state energy problem, which is QMA-complete, it is interesting to note that the same problem may become easy provided that the conjecture is true. It is important to note that no extra structure, e.g., the existence of spectral gap, is required to make this conclusion.

Our result supports a physical intuition that quantum coherence is lost at finite temperature, and thus makes the problem easier than the ground state problem. Of course, it should be noted that this conclusion is based on Conjecture \ref{conjecture:Markov}, which remains unproven. Proving this conjecture, as well as studying the complexity under such assumptions in higher dimensions would be the natural research direction to pursue. Another interesting question is whether the free energy problem for 1D local Hamiltonian is in P at finite temperature.

\bibliographystyle{ieeetr}
\bibliography{bib}

\end{document}